\documentclass{article}

\PassOptionsToPackage{numbers, sort&compress}{natbib}


\usepackage[preprint]{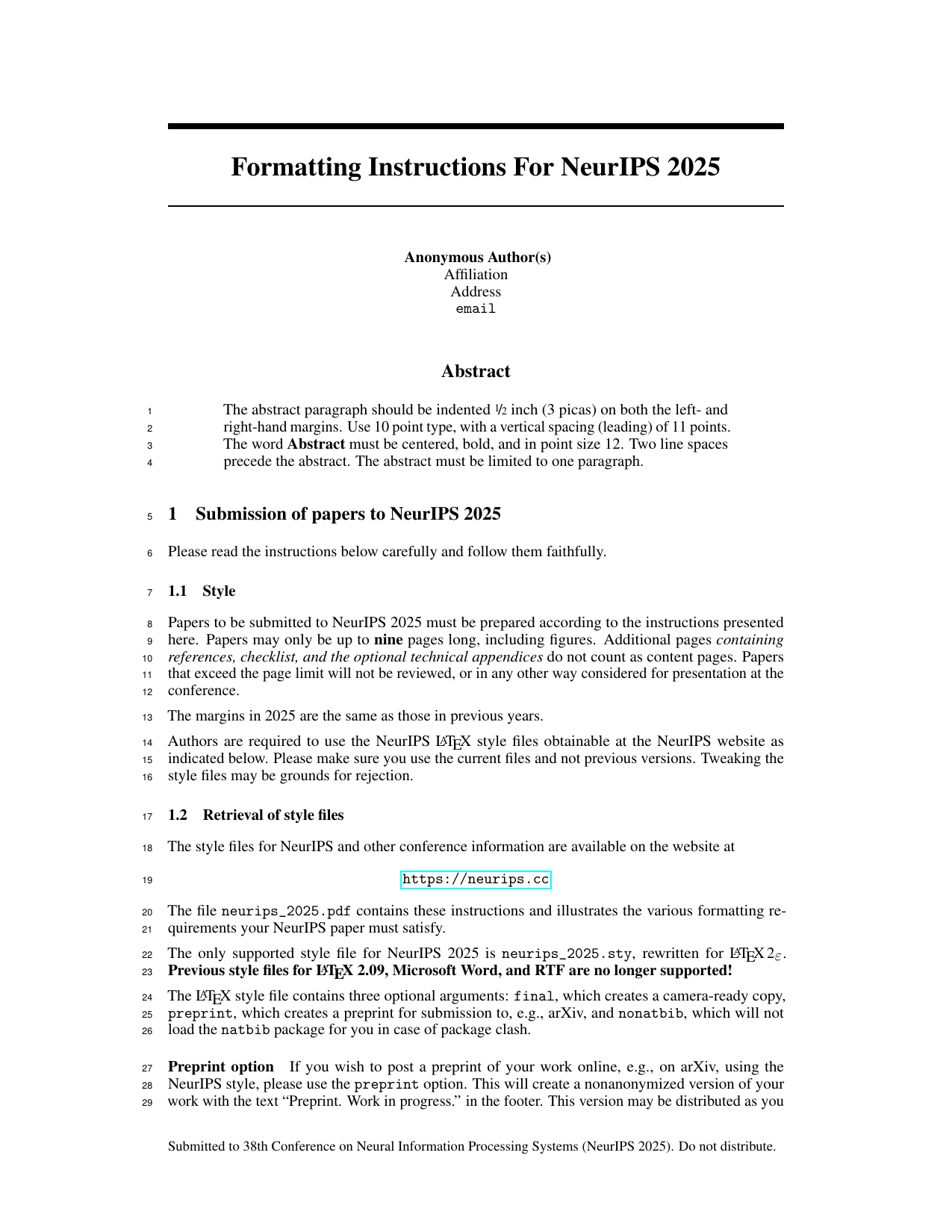}



\usepackage[utf8]{inputenc} 
\usepackage[T1]{fontenc}    
\usepackage{hyperref}       
\usepackage{url}            
\usepackage{longtable}
\usepackage{booktabs}       
\usepackage{amsfonts}       
\usepackage{nicefrac}       
\usepackage{microtype}      
\usepackage{xcolor}         
\usepackage{graphicx}
\usepackage{multirow}
\usepackage{colortbl}
 
\usepackage{microtype}
\usepackage{subfigure}
\usepackage{bbm}
\usepackage{physics}
\usepackage{enumitem}
\usepackage{appendix}
\usepackage{pdfpages}
\usepackage{makecell}
\usepackage{dsfont}
\usepackage{algorithm}
\usepackage{algorithmic}
\usepackage{booktabs}
\usepackage{tabularx}

\usepackage{amsmath}
\usepackage{amstext}
\usepackage{amsfonts}
\usepackage{amssymb}
\usepackage{mathtools}
\usepackage{amsthm}
\theoremstyle{plain}
\newtheorem{theorem}{Theorem}[section]
\newtheorem{proposition}[theorem]{Proposition}
\newtheorem{lemma}[theorem]{Lemma}

\theoremstyle{definition}
\newtheorem{definition}[theorem]{Definition}

\theoremstyle{remark}

\usepackage[capitalize,noabbrev]{cleveref}

\title{HybridLinker: Topology-Guided Posterior Sampling for Enhanced Diversity and Validity in 3D Molecular Linker Generation}

%

\author{%
  Minyeong Hwang \\
  KAIST \\
  \And 
  Ziseok Lee \\
  Seoul National University \\
  \And 
  Kwang-Soo Kim \\
  McLean Hospital,\\ Harvard Medical School \\
  \And 
  Kyungsu Kim \\
  Seoul National University \\
  \And 
  Eunho Yang \\
  KAIST \\
}




\begin{document}


\maketitle

\begin{abstract}
    Linker generation is critical in drug discovery applications such as lead optimization and PROTAC design, where molecular fragments are assembled into diverse drug candidates via molecular linker. Existing methods fall into point cloud-free and point cloud-aware categories based on their use of fragments' 3D poses alongside their topologies in sampling the linker's topology. Point cloud-free models prioritize sample diversity but suffer from lower validity due to overlooking fragments' spatial constraints, while point cloud-aware models ensure higher validity but restrict diversity by enforcing strict spatial constraints. To overcome these trade-offs without additional training, we propose HybridLinker, a framework that enhances point cloud-aware inference by providing diverse bonding topologies from a pretrained point cloud-free model as guidance. At its core, we propose LinkerDPS, the first diffusion posterior sampling (DPS) method operating across point cloud-free and point cloud-aware spaces, bridging molecular topology with 3D point clouds via an energy-inspired function. By transferring the diverse sampling distribution of point cloud-free models into the point cloud-aware distribution, HybridLinker significantly surpasses baselines, improving both validity and diversity in foundational molecular design and applied drug optimization tasks, establishing a new DPS framework in the molecular domains beyond imaging.
\end{abstract}

\vspace{-0.1in}
\section{Introduction}

\begin{figure}[h]
\vspace{-0.15in}
    \centering
    \includegraphics[width=0.95\linewidth]{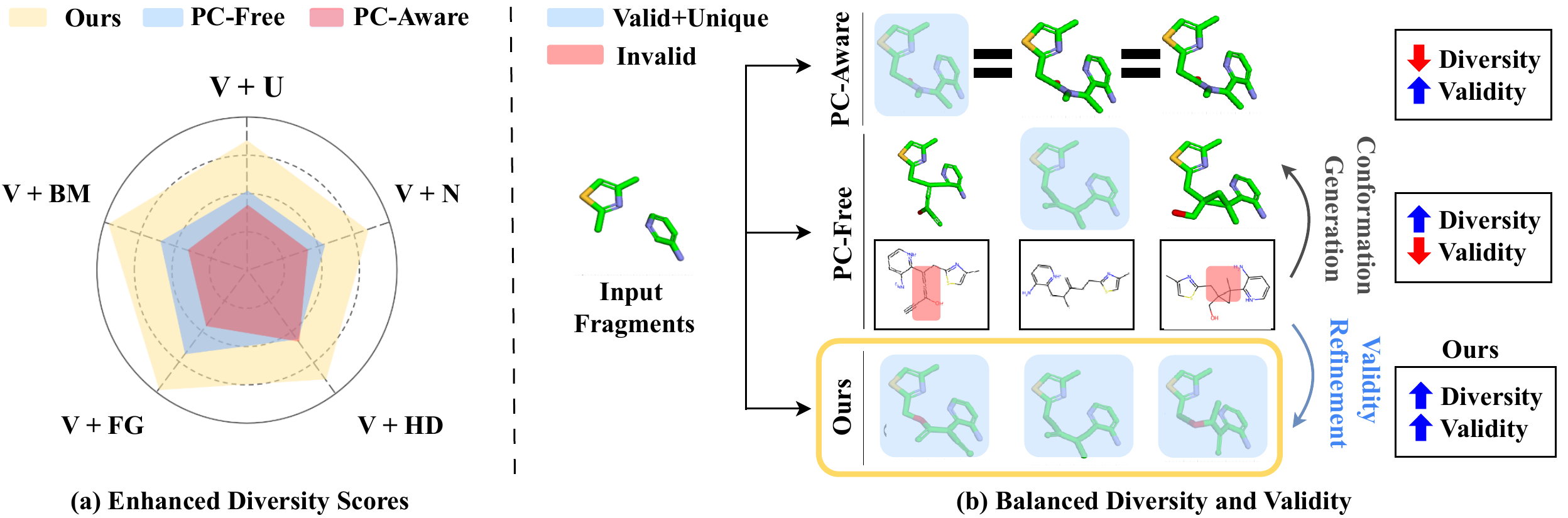}
    \vspace{-0.1in}
    \caption{(a) Qualitative comparison of our method and existing pipelines on diversity of valid molecule samples measured on five metrics. (b) Trade-off between diversity and validity in baseline models and our hybrid approach overcoming these limitations.}
    \label{fig:intro}
\vspace{-0.1in}
\end{figure}

Fragment-based drug discovery strategically assembles molecular fragments to create stable and effective drug candidates. A key challenge is linker generation, which involves designing molecular linkers that connect fragments of fixed 3D poses while ensuring their validity. In this work, we categorize existing linker generation models into Point Cloud-Free (PC-Free) and Point Cloud-Aware (PC-Aware) models, distinguished by whether they account for the 3D conformation (3D poses) of fragments when determining the bonding topology of the complete molecule. Furthermore, through a fair comparison, we highlight that achieving a balance between diversity (exploring novel topological spaces) and validity (ensuring spatial consistency between fragments and the linker) remains a significant challenge in current linker design methodologies.

\textbf{Point Cloud-Free Models ($\uparrow$ Diversity $\downarrow$ Validity).} These models, such as FFLOM\cite{jin2023fflom} and DeLinker\cite{imrie2020delinker}, generate diverse bonding topologies based on fragment connectivity while ignoring the 3D conformations of the given fragments. Subsequently, they predict 3D conformations that reflect the predefined fragment geometries and the sampled topology using conformation generators\cite{etkdg,geodiff,confvae}. By excluding 3D fragment conformations during topology generation, these methods increase entropy in the sampling distribution, thereby enhancing diversity. However, the sampled topology often fails to produce a ligand conformation that aligns well with the predefined fragment geometry, resulting in invalid, high-energy molecules

\textbf{Point Cloud-Aware Models ($\downarrow$ Diversity $\uparrow$ Validity).} 
DiffLinker\cite{igashov2024difflinker} and 3DLinker\cite{huang2022_3dlinker} achieve high validity by incorporating fragments' 3D poses when sampling the linker's bonding topology, ensuring spatial alignment between fragments and generated linkers. They achieve this by performing generation in point cloud space, where the 3D poses of fragments are explicitly encoded as atomic coordinates, and the linker's topology is derived by from the linker's point cloud generated to align with the fragments' atomic coordinates. However, their strict spatial constraints lower sampling entropy, limiting the exploration of diverse topologies. This constrained search space increases the risk of overfitting, making it challenging to generate topologically diverse drug candidates.

\begin{figure*}[!t]
  \vspace{-0.2in}
  \centering
  \includegraphics[width=1.0\linewidth]{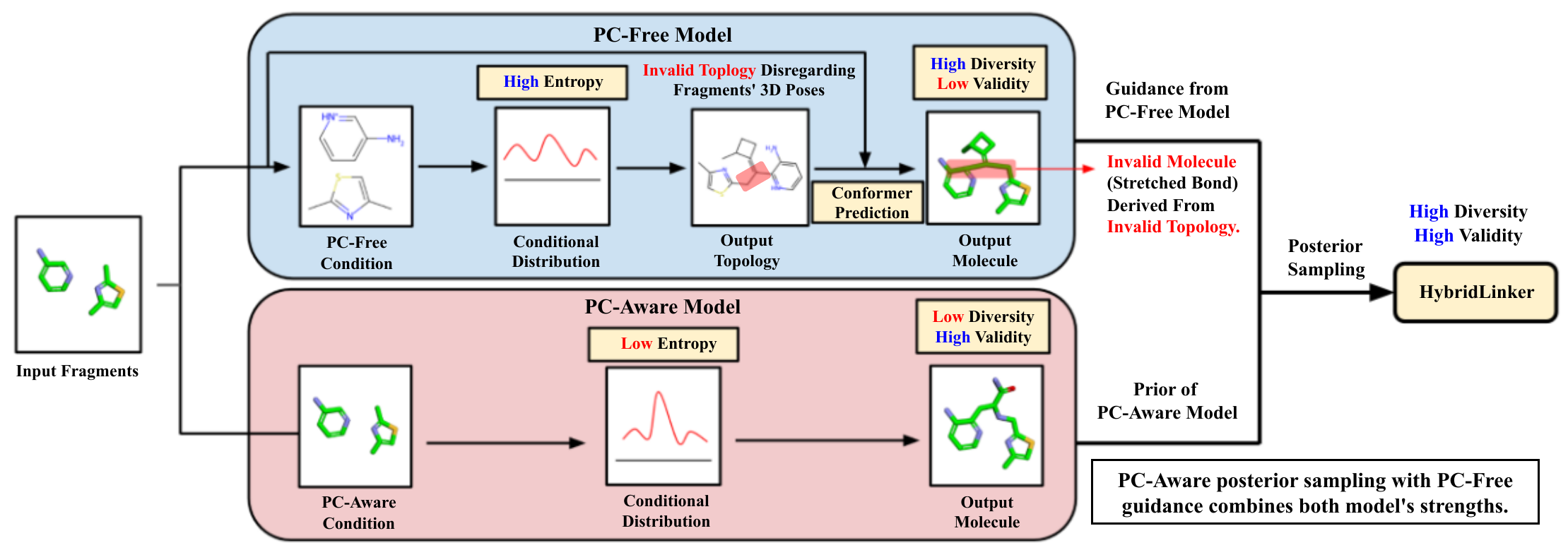}
  \vspace{-0.2in}
  \caption{Comparison of generation pipelines for PC-Free and PC-Aware models. HybridLinker is designed to leverage the strengths of both approaches, inheriting high diversity from the PC-Free model and high validity from the PC-Aware model.}
  \label{fig:diagram}
  \vspace{-0.2in}
\end{figure*}

To overcome the diversity-validity trade-off, we propose \textbf{HybridLinker}, a framework that integrates the strengths of PC-Free and PC-Aware models. By guiding a PC-Aware model with bonding topologies sampled from a PC-Free model, HybridLinker enhances diversity while preserving validity. At its core, we introduce LinkerDPS, the first diffusion posterior sampling (DPS) method that bridges molecular topology and point cloud space via an energy-inspired function. This approach transfers the highly diverse samples of PC-Free models into the validity-focused distribution of PC-Aware models, achieving a balanced trade-off between diversity and validity. \cref{fig:intro} illustrates how HybridLinker surpasses existing methods.

We evaluate HybridLinker on the ZINC\cite{irwin2020zinc20} test dataset, a standard benchmark in drug discovery, demonstrating its ability to generate diverse and valid molecules from fragment inputs. By leveraging zero-shot cooperation between PC-Free and PC-Aware models, HybridLinker surpasses existing methods in diversity of valid molecules and its enhanced diversity also drives superior performance in drug-likeness optimization task, highlighting its potential as a foundational model. Moreover, HybridLinker’s success validates LinkerDPS, showcasing its versatility across various domains.

Our contributions can be summarized as follows:
\begin{itemize}
    \item We are the first to highlight the trade-off between diversity and validity in a fair comparison of Point Cloud-Free and Point Cloud-Aware linker generation models.
    \item We present HybridLinker, a simple yet effective framework that integrates pretrained Point Cloud-Free and Point Cloud-Aware linker generation models within a two-step generation pipeline, enabling zero-shot inference inheriting their strengths.
    \item We introduce LinkerDPS, the first DPS method beyond the image domain, operating across molecular topology and point cloud spaces. It bridges these domains through an energy-inspired cross-domain function, enabling effective topology-guided molecular point cloud generation. Its cross-domain guidance overcomes challenges in point cloud space by leveraging topology space as an intermediary, offering a wide range of potential applications.
    \item We evaluate HybridLinker across both the fundamental task of diverse drug-candidates sampling and the application-driven task of drug-likeness optimization, demonstrating its potential as a foundation model. Furthermore, validated perfomance of LinkerDPS in linker generation show its potential for broader applications, particularly in challenging point cloud tasks benefiting from topological guidance.
\end{itemize}

\begin{table}[t]
\centering
\caption{Trade-off between diversity and validity in PC-Free and PC-Aware models. The \textbf{F/A} column indicates whether the model belongs to PC-Free (F) or PC-Aware (A). Uniqueness is a metric that measures the diversity of samples, while V + U represents uniqueness counting only valid molecules.}
\resizebox{0.6\textwidth}{!}{%
\begin{tabular}{lcccc}
\toprule
\textbf{Method} & \textbf{F / A} & \textbf{Uniquess ($\uparrow$)} & \textbf{Validity ($\uparrow$)} & \textbf{V + U ($\uparrow$)} \\ 
\midrule
FFLOM & F & \textbf{0.840} (High) & 0.370 (Low) & 0.313 \\ 
DeLinker & F & \underline{0.638} (High) & 0.575 (Low) & \textbf{0.386} \\ 
\midrule
DiffLinker & A & 0.349 (Low) & \textbf{0.711} (High) & 0.269 \\ 
3DLinker & A & 0.443 (Low) & \underline{0.654} (High) & \underline{0.326} \\ 
\bottomrule
\end{tabular}%
\vspace{-0.4in}}
\label{tab:tradeoff}
\vspace{-0.2in}
\end{table}

\section{Background}
In the appendix, we provide a table of notations (\cref{app:notationtable}), a summary of relevant tasks in molecule generation (\cref{app:relatedwork}), and an extended definition of molecular validity that incorporates 3D conformation alongside topological rules (\cref{app:validity}).

\subsection{Molecular Representations}
\label{sec:moleculerepresentation}
We provide a detailed description of molecular representations in \cref{app:detaileddescriptionsformolecularrepresentations}. Briefly, we represent a molecule as a 3D graph $G = (V, E, R)$, where $V \in \{0,1\}^{N \times A}$ denotes the one-hot encoded atom types for $N$ atoms and $A$ atom types, $E \in \{0,1\}^{N \times N \times B}$ represents the presence of $B$ bond types between atom pairs, and $R \in \mathbb{R}^{N \times 3}$ specifies the 3D coordinates (i.e., 3D pose) of the atoms. We further define two key sub-representations: the \textbf{bonding topology} $\mathcal{T} = (V, E)$, which captures atomic connectivity, and the \textbf{point cloud representation} $(V, R)$, which captures the spatial configuration of atoms. A 3D molecular graph can thus be equivalently expressed as $G = (\mathcal{T}, R)$, combining both topological and geometric information. We denote by $\mathbb{P}_G = \mathbb{P}_{\mathcal{T}, R}$ the distribution (i.e., dataset) of valid 3D molecular graphs, viewed as the joint distribution over bonding topologies and 3D conformations. In prior works on molecular modeling~\cite{edm,igashov2024difflinker,huang2022_3dlinker,jin2023fflom}, the choice of molecular representation varies depending on task objectives and modeling efficiency. 


\paragraph{Point Cloud Representation for 3D Diffusion Models.} In recent 3D molecule diffusion models~\cite{edm,igashov2024difflinker,fragdiff,sbdddiff,diffdock}, the one-hot encoded atom types $V \in \{0,1\}^{N \times A}$ and atomic coordinates $R \in \mathbb{R}^{N \times 3}$ are flattened and embedded into a continuous space $\mathbb{R}^d$, where $d = N \times (A + 3)$, combining both atom identity and spatial information. The diffusion process operates in this continuous space. After sampling, discrete atom types are recovered via \texttt{argmax} over $V$, and chemical bonds $E$ are predicted post hoc using a bond predictor $\mathcal{E}$, resulting in the final molecular graph $G = (V, E, R)$.


\subsection{Problem Setup:Linker Generation} 
Linker generation~\citep{fbdd} is a fundamental approach for high-affinity drug design, starting from fragments of a reference drug that interact with the target protein. The task is defined as generating submolecules, referred to as linkers, which connect the two fragments into a complete molecule. The 3D poses of the fragments are preserved as in the reference drug to maintain their spatial compatibility with the binding pocket, thereby contributing to the high affinity of the generated molecules.
\paragraph{Input.} Suppose we are given two molecule fragments $ G_1 = (\mathcal{T}_1, R_1) $ and $ G_2 = (\mathcal{T}_2, R_2) $, where $ G_1 $ and $ G_2 $ are subgraphs of a reference molecule $ G_{\text{ref}} = (\mathcal{T}_{\text{ref}}, R_{\text{ref}}) $. The reference molecule $ G_{\text{ref}} $, which contains $ N_\text{ref} $ atoms, is chemically valid, and belongs to a molecular graph dataset.
Further, we set $\mathcal{T}_\text{cond}=\mathcal{T}_1 \cup \mathcal{T}_2$, and $R_\text{cond} = {R}_1 \cup {R}_2$.

\textbf{Output.} Our task is to generate new complete 3D molecular graphs $ G' = (\mathcal{T}', R')$,  by sampling from the following conditional distribution\footnote{We write $ G_1 \subset G' $ to denote that $G'$ contains $G_1$ as its subgraph, meaning that $ V_1 \subset V', E_1 \subset E', R_1 \subset R' $. We analogously write $\mathcal{T}_1\subset\mathcal{T}$ to mean $ V_1 \subset V', E_1 \subset E'$. } (\cref{fig:fig_setup}):
\begin{equation}
        G' \sim \mathbb{P}_{G}(G \mid \underbrace{G_1, G_2 \subset G, N_\text{ref} \text{ atoms}}_{\mathcal{F}=\mathcal{T}_\text{cond}\cap R_\text{cond}:\text{ fragment conditions}})
    \label{eq:3d_linker_generation}
\end{equation}
For simplicity, we use the shorthand $ \mathbb{P}_{G}(G \mid \mathcal{F}) $ to denote this joint distribution which captures the topological ($\mathcal{T}_\text{cond}$) and geometric ($R_\text{cond}$) constraints imposed by a fragment condition $\mathcal{F}$. We analogously write the marginal distributions of $\mathcal{T}, R$ as $\mathbb{P}_\mathcal{T}$ and $\mathbb{P}_R$, respectively. 

\begin{figure}[t!]
    \centering
    \vspace{-0.2in}
    \includegraphics[width=0.9\linewidth]{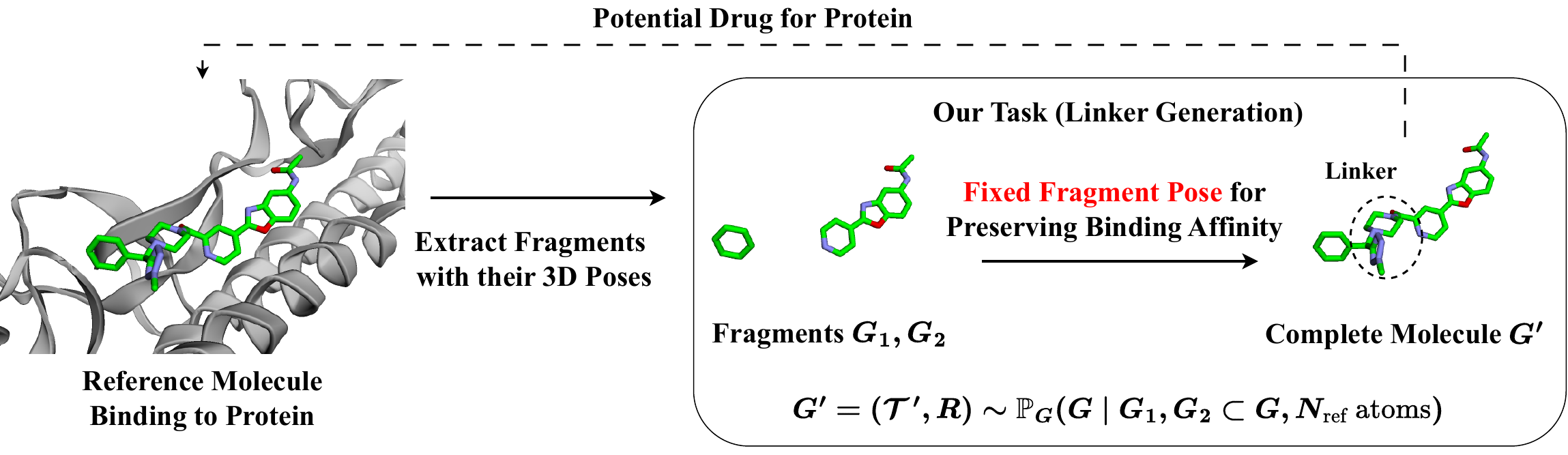}
    \caption{Problem setup for linker generation with its context in drug discovery, illustrating the fragments $ G_1 $ and $ G_2 $ embedded in a candidate molecule $ G' $. $N_\text{ref}$ is the size of reference molecule.}
    \label{fig:fig_setup}
\vspace{-0.2in}
\end{figure}

\subsection{PC-Free and PC-Aware Approaches}
Existing works in linker generation divide into two approaches: Point Cloud Free (PC-Free) and Point Cloud Aware (PC-Aware), based on the utilization of fragments' 3D poses $R_\text{cond}$ when sampling the topology of the output molecule $\mathcal{T}$ . To clarify, PC-Free models use the following approximation:

\begin{equation}
\label{eq:pc-freeapprox}
    \mathbb{P}_\mathcal{T}(\mathcal{T}\mid \underbrace{\mathcal{T}_\text{cond}, R_\text{cond}}_{\mathcal{F}}) \approx \mathbb{P}_\mathcal{T}(\mathcal{T}\mid\mathcal{T}_\text{cond}),
\end{equation}
and the probability distribution of $\mathcal{T}$ in two approaches' samples are  distinguished as following:
\begin{equation}
\label{eq:pc-af}
\begin{split}
    &\text{PC-Free}: \mathbb{P}_{\mathcal{T}\mid\mathcal{T}_{\text{cond}}}(\mathcal{T}\mid\mathcal{T}_\text{cond}) \\
    &\text{PC-Aware}: \mathbb{P}_{\mathcal{T}\mid\mathcal{T}_{\text{cond}},{R_{\text{cond}}}}(\mathcal{T}\mid\mathcal{T}_\text{cond}, \boxed{R_\text{cond}})
\end{split}
\end{equation}
In the following, we explain each approach in detail with related models.
\paragraph{Point Cloud Free (PC-Free) Approach.}
PC-Free approaches \cite{jin2023fflom, imrie2020delinker} model the conditional distribution as in \eqref{eq:2d_linker_correct}, sequentially sampling the bonding topology $\mathcal{T}$ conditioned on \textbf{only the fragments' bonding topology} $\mathcal{T}_\text{cond}$  and then the 3D conformation $R$. 
\begin{equation}
\begin{split}
    \mathbb{P}_{G}(G\mid\mathcal{F}) &\overset{\text{Bayes}}{=}\mathbb{P}_\mathcal{T}(\mathcal{T}\mid\boxed{\mathcal{F}})\cdot \mathbb{P}_{R\mid\mathcal{T}}(R\mid\mathcal{T},\mathcal{F})
    \\ &\overset{\eqref{eq:pc-freeapprox}}{\approx}\mathbb{P}_\mathcal{T}(\mathcal{T}\mid\boxed{\mathcal{T}_\text{cond}})\cdot \mathbb{P}_{R\mid\mathcal{T}}(R\mid\mathcal{T},\mathcal{F}) \\
    &=:\mathbb{P}_\text{2D}(G\mid\mathcal{F})
\end{split}
\label{eq:2d_linker_correct}
\end{equation}
\begin{figure*}[t]
  \hspace{-20pt}
  \includegraphics[width=1.05\linewidth]{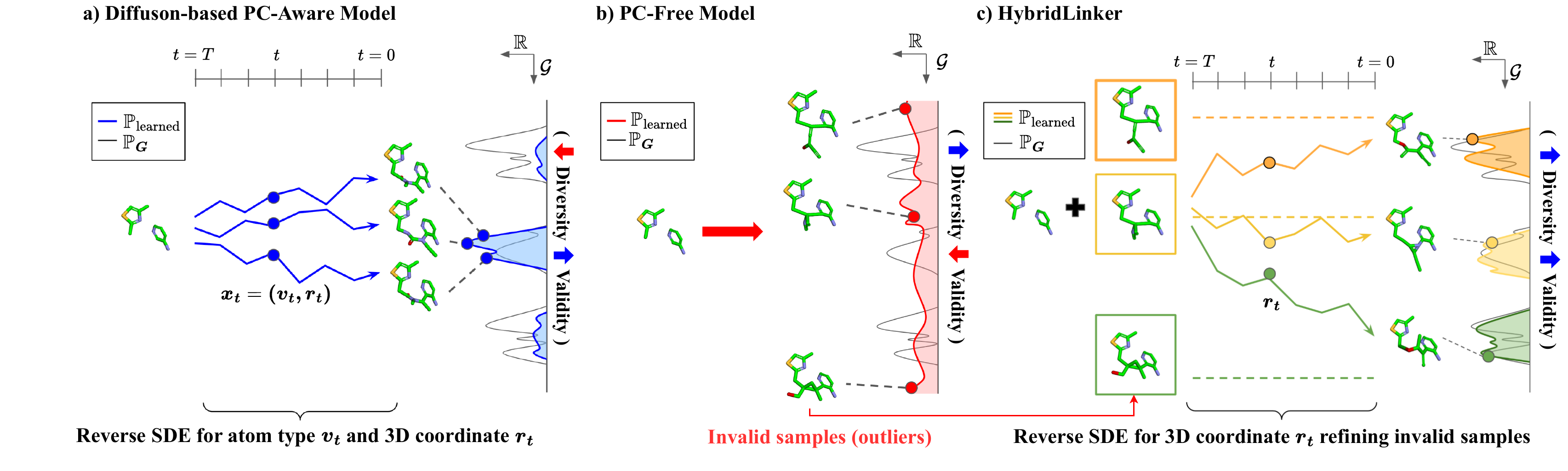}
  \vspace{-0.2in}
  \caption{Conceptual comparison of sampling distributions in Diffusion-based PC-Aware, PC-Free, and HybridLinker models. (a) Diffusion-based PC-Aware models focus on validity but suffer from low diversity due to spatial constraints. (b) PC-Free models explore a broad molecular space but often generate invalid molecules.  (c) HybridLinker leverages LinkerDPS to balance diversity and validity, enhancing exploration while maintaining correctness.}
  \label{fig:samplingdistributions}
  \vspace{-0.2in}
\end{figure*}
However, the approximation in \eqref{eq:pc-freeapprox}, which assumes that the linker's topology is independent of $R_\text{cond}$, is often inaccurate. In reality, the topology of the linker generally depends on $R_\text{cond}$, as its topology-driven conformation must align with $R_\text{cond}$ to ensure energetic stability. To emphasize the distinction between the true distribution $\mathbb{P}_{G}$ and its approximation, we denote the latter as $\mathbb{P}_\text{2D}$, referring to it as the \textbf{surrogate distribution}.

PC-Free models sample from \eqref{eq:2d_linker_correct} by using a neural network to approximate $\mathbb{P}_\mathcal{T}(\mathcal{T}\mid\mathcal{T}_\text{cond})$ and an off-the-shelf conformation generator $\mathcal{R}$, i.e., RDkit~\citep{rdkit}, which produces the 3D conformation of the molecule including condition of $ R_1 $ and $ R_2 $ as $R'= \mathcal{R}(\mathcal{T}', R_1, R_2) \in \mathbb{R}^{N \times 3}$.

\paragraph{Point Cloud Aware (PC-Aware) Approach.}
In contrast, PC-Aware models like DiffLinker \cite{igashov2024difflinker} and 3DLinker \cite{huang2022_3dlinker} learn the distribution of bonding topology conditioned on both 3D poses and bonding topology of fragments, as in \eqref{eq:pc-af}. In particular, DiffLinker and 3DLinker incorporate atom-wise distances between linker atoms and fragments to account for $R_\text{cond}$ in determining both topology and conformation of linker atoms, thereby resulting in a non-approximated sampling framework \eqref{eq:pc-aware}.
\begin{equation}
    \mathbb{P}_G(G\mid \mathcal{F})=\underbrace{\mathbb{P}_{V,R}(V,R\mid \mathcal{F})}_{\text{Generative Model}}\cdot \underbrace{\mathbb{P}_{E\mid V,R}(E\mid \mathcal{F},V,R)}_{E=\mathcal{E}(\mathcal{F},V,R)}
\label{eq:pc-aware}
\end{equation}
This formulation allows generative models to focus on sampling atomic positions while leveraging external bond inference model $\mathcal{E}$, i.e., Open Babel~\citep{obabel}.

\subsection{Trade-Off in Diversity and Validity}
In generating drug candidates, attaining both high validity and diverisity is crucial. However, due to different sampling strategy, PC-Free and PC-Aware models exhibit an inverse relationship between diversity and validity. Specifically, PC-Free models demonstrate \textbf{high diversity but low validity}, whereas PC-Aware models achieve \textbf{high validity but low diversity}.

\paragraph{Diversity} 
PC-Free models generate $\mathcal{T}$ without conditioning on $R_\text{cond}$, leading to greater diversity compared to PC-Aware models, which incorporate $R_\text{cond}$ in sampling $\mathcal{T}$. This follows from the general principle that conditioning reduces entropy, 
\begin{equation}
    {H(\mathcal{T} \mid \mathcal{T}_\text{cond}, R_\text{cond})} \leq {H(\mathcal{T} \mid \mathcal{T}_\text{cond})}.
\end{equation}

\paragraph{Validity} 
PC-Free models face validity issues due to the inaccurate assumptions \eqref{eq:pc-freeapprox} in \eqref{eq:2d_linker_correct}. As described in \eqref{eq:pc-af}, since the 3D conformations of fragments, $R_1$ and $R_2$, are ignored when determining the bonding topology $\mathcal{T}'$, even a model that perfectly learns the target distribution $\mathbb{P}_\text{2D}$ will still deviate from the true molecular distribution $\mathbb{P}_G$. In contrast, PC-Aware models inherently overcome this limitation by explicitly training on the linker generation task in \eqref{eq:pc-aware}.

\subsection{Diffusion-based PC-Aware Model}
\label{subsec:diffusion-basedpc-awaremodel}
Following their success in a wide range of molecular generation tasks~\cite{edm,igashov2024difflinker,fragdiff}, diffusion-based models have recently become the baseline for PC-Aware linker generation~\cite{igashov2024difflinker}. It leverages the reverse SDE to iteratively denoise a randomly sampled $(v_T, r_T) \sim \mathcal{N}(0, I)$ into a valid point cloud $(v_0, r_0) = (V, R)$, conditioned on fragment input $\mathcal{F}$, over $T$ timesteps. Specifically, the process is formulated as
\begin{equation}
\begin{cases}
    (v_t, r_t) \sim \mathcal{N}(0, I)\\
    d(v_t, r_t) = \left[ f(v_t, r_t, t) - g^2(t) \nabla_{v_t, r_t} \log p_t(v_t, r_t \mid \mathcal{F}) \right] dt + g(t) d\bar{W}_t
\end{cases}
\label{eq:sde_pcaware}
\end{equation}
where $\bar{W}_t$ denotes the reverse Wiener process. The drift coefficient is defined as $f(v_t, r_t, t) = -\frac{1}{2} \beta_t \cdot (v_t, r_t)$, and the diffusion coefficient as $g(t) = \sqrt{\beta_t}$, with $\beta_t$ being a predefined noise schedule. The term $\nabla_{v_t, r_t} \log p_t(v_t, r_t)$, referred to as the score, is estimated using a trained network $s_{\theta^*}$. The final denoised graph $x_0$ is converted into a one-hot encoded molecular point cloud using \texttt{argmax}, followed by a post-hoc bond predictor, as described in \cref{sec:moleculerepresentation}.

\section{Method}

Rather than directly sampling $G' \sim \mathbb{P}_G$ in a single step, we propose a hybrid two-step approach that leverages both PC-Free and PC-Aware linker generation models.

\subsection{Hybrid Approach}
We develop a straightforward and intuitive pipeline that integrates two pretrained models, leveraging a novel theoretical methodology to execute the process. To adopt two types of models, we rewrite our sampling as two-step pipeline given a fragment condition $\mathcal{F} = \mathcal{T}_\text{cond} \cap R_\text{cond}$ as $\mathbb{P}_{G, \tilde{G}}(G, \tilde{G} \mid \mathcal{F}) =
    \mathbb{P}_{\tilde{G}} (\tilde{G} \mid \mathcal{F})
    \cdot 
    \mathbb{P}_{G \mid \tilde{G}}(G \mid \tilde{G}, \mathcal{F})$,
where we first sample surrogate molecule $\tilde{G}$ to support sampling of $G$.

Now, we plug-in the PC-Free and PC-Aware models to implement the two steps pipeline. In the first stage, we set $\mathbb{P}_{\tilde{G}}$ as the PC-Free distribution $\mathbb{P}_{\text{2D}}$, which captures the topological diversity of $\mathbb{P}_G$ but also includes invalid samples. Correspondingly, $\mathbb{P}_{G\mid\tilde{G}}$  in the second stage becomes sampling of valid molecule equal or similar to $\tilde{G}$. To clarify, if $ \tilde{G} $ is invalid, we run a molecular refinement process to obtain a similar but valid molecule; otherwise, we set $ G $ to be $ \tilde{G} $.
To implement the molecular refinement process, we perform posterior sampling using the validity-focused prior distribution of $ G $, learned by the PC-Aware model, and our likelihood of $ \tilde{G} $, which favors similarity with $ G $. To summarize, we implement the pipeline as \eqref{eq:hybrid_linker_correct}. 
\begin{equation}
    \mathbb{P}_{G,\text{2D}}(G, \tilde{G} \mid \mathcal{F}) =
    \underbrace{\mathbb{P}_\text{2D}(\tilde{G} \mid \mathcal{F})}_{\text{Step 1: PC-Free}}
    \cdot 
    \underbrace{\mathbb{P}_{G \mid \text{2D}}(G \mid \tilde{G}, \mathcal{F})}_{\text{Step 2: PC-Aware}}
\label{eq:hybrid_linker_correct}
\end{equation}
At a high level, our implementation can be viewed as transferring the high-entropy of the PC-Free distribution into the validity-focused PC-Aware distribution, leading to a coordinated distribution that ensures both high diversity and validity of $G$. Specifically, our process produces a tuple comprising a valid 3D molecule sample $G \sim \mathbb{P}_G$ and a surrogate $\tilde{G} \sim \mathbb{P}_\text{2D}$. We describe the algorithmic difference of HybridLinker to PC-Aware model and PC-Free model in ~\cref{app:algorithmiccomparisononpcawaremodel}. Similary, \cref{fig:samplingdistributions} visualizes and compares the sampling distribution of PC-Aware, PC-Free, and HybridLinker.

\subsection{Motivation for a Hybrid Approach} 
\paragraph{Experimental Comparison} 
To validate this trade-off, we evaluate PC-Free and PC-Aware models on the ZINC dataset using standard metrics: \textbf{Validity} for validity, \textbf{Uniqueness} for diversity and \textbf{V + U} for diversity of valid molecules. The experimental setup follows \cref{subsec:experiments_experimentalsetup}. As shown in \cref{tab:tradeoff}, PC-Free models excel in diversity, while PC-Aware models perform best in validity, confirming our hypothesis.

As observed in V + U score, the trade-off between diversity and validity significantly limits the practical utility of existing linker generation models. To overcome this limitation, we propose a \textbf{hybrid approach} that integrates the strengths of both PC-Free and PC-Aware models without requiring any training. Our hybrid pipeline leverages pretrained PC-Free and PC-Aware models cooperatively to achieve both high diversity and validity. \cref{fig:diagram} illustrates the generation processes of PC-Free and PC-Aware models, along with goal of our proposed hybrid strategy of inheriting strength of two models.

\subsection{Posterior Sampling From $\mathbb{P}_{G\mid\text{2D}}$ via LinkerDPS}
\label{sec:posteriorsamplingfrom}

The challenge in sampling $ \mathbb{P}_{G \mid \text{2D}} $ lies in refining an invalid surrogate molecule into a valid one, which corresponds to the inverse problem\cite{diffinv,diffusionsurvey,dps} in the molecular domain. Since this area remains unexplored, we introduce LinkerDPS, the first DPS\cite{dps} method designed for the molecular domain. By adopting the reverse process of the diffusion-based PC-Aware model, LinkerDPS samples a refined point cloud $ (V, R) $ favored by the pretrained prior distribution of valid point clouds.
\vspace{-0.1in}
\begin{equation}
    p(V, R \mid \tilde{G}, \mathcal{F}) = \frac{\overbrace{p(V, R\mid \mathcal{F})^{\rule{0pt}{0.17cm}}}^{\text{prior}} \cdot \overbrace{p(\tilde{G}\mid V, R, \mathcal{F})}^{\text{similarity}}}{p(\tilde{G}\mid \mathcal{F})}
    \label{eq:posterior_decomposition}
\end{equation}
As shown in \eqref{eq:posterior_decomposition}, leveraging Bayes' rule, we decompose $ p(V, R\mid \tilde{G}, \mathcal{F}) $ into the prior distribution $ p(V, R\mid \mathcal{F}) $ and the likelihood $ p(\tilde{G}\mid V, R, \mathcal{F}) $, which respectively ensure the validity of $ (V, R) $ and its similarity with $ \tilde{G} $. Beyond the prior learned by the pretrained PC-Aware model, we design the likelihood of $ \tilde{G} = (\tilde{V}, \tilde{E}, \tilde{R}) $ as follows. We disregard the condition $ \mathcal{F}$ in likelihood since $\tilde{G}$ always satisfies the condition. Additionally, we assume mutual independence of $ \tilde{V}, \tilde{E}, \tilde{R} $ given $ V $ and $ R $.
\vspace{-0.1in}
\paragraph{Likelihood of Atom.} To ensure consistency between the surrogate and generated molecules, we enforce that the atoms of both molecules remain the same. Formally, the likelihood of $ \tilde{V} $ is defined as
\begin{equation}
    \mathbb{P}_{\tilde{V}\mid V}(\tilde{V}\mid V) := \mathbb{I}(\tilde{V} = V)
    \label{eq:likelihoodofatom}
\end{equation}
where $ \mathbb{I} $ denotes the indicator function.

\vspace{-0.1in}
\paragraph{Likelihood of Bond and Conformation} To account for the cross-domain nature of bond likelihood given a point cloud, we introduce a molecular energy-inspired function $ U $, defined as  
\begin{equation}
    U(E^*, R^*) := \sum_{\substack{1 \leq i, j \leq N}} \mathbbm{1}_{E^*_{i,j} \neq \mathbf{0}} \|R^*_i - R^*_j\|,
    \label{eq:u}
\end{equation}
which quantifies the energy of a bond-conformation system forming a molecule. We define the cross-domain similarity of $ \tilde{E} $ with a given $ R $ and its likelihood as the probability that a molecule encompassing $ R $ adopts the bond $ \tilde{E} $. Utilizing the Boltzmann distribution, this quantity is expressed as being proportional to $ \exp \small( -U(\tilde{E}, R) \small) $. Similarly, to model the likelihood of $ \tilde{R} $, we employ a Gaussian kernel $ \kappa $ to quantify the similarity between $ \tilde{R} $ and $ R $ as $\kappa(\tilde{R}, R) = e^{-\|\tilde{R} - R\|^2}$, and we define the likelihood of $ \tilde{R} $ to be proportional to the quantity. Combining these components, we formulate the joint likelihood of $ (\tilde{E}, \tilde{R}) $ as
\begin{equation}
    p(\tilde{E}, \tilde{R} \mid R) := \frac{1}{Z}\kappa(\tilde{R}, R) e^{-U(\tilde{E}, R)},
\label{eq:likelihoodofbondconformation}
\end{equation}
where $ Z $ is the normalization constant, assumed to be independent of $ R $.

Accordingly, the likelihood of $\tilde{G}=(\tilde{V}, \tilde{E}, \tilde{R})$ in \eqref{eq:posterior_decomposition} is defined as 
\begin{equation}
\label{eq:finallikelihood}
    p(\tilde{G} \mid V, R, \mathcal{F}) = \mathbb{I}(\tilde{V} = V) \cdot \frac{1}{Z}\kappa(\tilde{R}, R) e^{-U(\tilde{E}, R)}
\end{equation}

\begin{table*}[t]
\centering
\vspace{-0.1in}
\caption{Comparison of linker generation models across diversity and validity metrics. The F/A column indicates whether the model belongs to a PC-Free (F), PC-Aware (A), or hybrid (F + A) category. The “Diversity w/ Validity” columns quantify the diversity of molecules that meet validity criteria and underscore HybridLinker’s ability to sample a broad range of valid drug candidates. We also report the supporting metrics that independently measure validity and diversity without considering validity (i.e., only satisfying topological valence rules) in the columns named “Validity” and “Diversity w/o Validity,” respectively.}
\vspace{0.1in}
\renewcommand{\arraystretch}{1.1} 
\large 
\resizebox{1.0\textwidth}{!}{ 
\begin{tabular}{lc|ccccc|cccc}
\toprule
& \multirow{2}{*}{\vspace{-0.1in}\textbf{F/A}} & \multicolumn{5}{c|}{$\bullet$ \textbf{Diversity w/ Validity}} & \multirow{2}{*}{\vspace{-0.1in} \textbf{Validity (\%)}} & \multicolumn{2}{c}{\textbf{Diversity w/o Validity}} \\ 
\cmidrule(lr){3-7} \cmidrule(lr){9-10}
& & \textbf{V+U (\%)} & \textbf{V+N (\%)} & \textbf{V+HD} & \textbf{V+FG} & \textbf{V+BM} & & \textbf{Uniqueness (\%)} & \textbf{Novelty (\%)} \\ 
\midrule
FFLOM & F & 31.25 & 31.32 & 6.38 & 16.37 & 14.48 & 37.07 & \textbf{84.04} & \textbf{58.29} \\ 
DeLinker & F & 38.57 & 38.53 & 7.32 & 16.20 & 17.33 & 57.44 & 63.78 & 31.63 \\ 
DiffLinker & A & 26.90 & 26.90 & 5.08 & 11.37 & 10.91 & \underline{71.08} & 34.94 & 16.32 \\ 
3DLinker & A & 32.61 & 32.64 & 6.14 & 15.40 & 14.23 & 65.31 & 44.31 & 21.83 \\ \hline
\rowcolor[HTML]{D9EAD3} HybridLinker (FFLOM) & F+A & \textbf{55.10} & \underline{55.09} & \textbf{10.81} & \textbf{23.92} & \textbf{24.59} & 69.03 & \underline{68.39} & \underline{44.67} \\ 
\rowcolor[HTML]{D9EAD3} HybridLinker (DeLinker) & F+A & \underline{55.02} & \textbf{55.28} & \underline{10.21} & \underline{21.82} & \underline{24.23} & \textbf{77.27} & 68.09 & 35.70 \\ 
\bottomrule
\end{tabular}%
}
\label{tab:comparisondiversity}
\vspace{-0.2in}
\end{table*}

\paragraph{LinkerDPS Formulation} By \eqref{eq:finallikelihood}, sampling refined point cloud $(V, R)$ via \eqref{eq:posterior_decomposition} reduces to sampling $R$ while fixing $ V=\tilde{V} $, formulated as \eqref{eq:posterior_conformation} where $\tilde{V} = V$ gives $p(\tilde{G}\mid R, V, \mathcal{F}) = p(\tilde{E}, \tilde{R}\mid R)$.
\begin{equation}
    p(R \mid \tilde{G}, V, \mathcal{F}) \overset{\text{Bayes}}{\propto} p(R \mid V,\mathcal{F}) \cdot \underbrace{p(\tilde{G}\mid R, V, \mathcal{F})}_{=p(\tilde{E}, \tilde{R}\mid R) = \eqref{eq:likelihoodofbondconformation}},
\label{eq:posterior_conformation}
\end{equation} 
where $\mathbb{I}(\tilde{V} = V)=1$ gives $p(\tilde{G}\mid R, V, \mathcal{F}) = p(\tilde{E}, \tilde{R}\mid R)$ by $\eqref{eq:finallikelihood}$. 

To sample $R$ from \eqref{eq:posterior_conformation}, we derive the reverse SDE in \eqref{eq:reverse_cond_diffusion} and solve it using the pretrained score network of the PC-Aware model. 
\begin{equation}
\begin{split}
    dr_t&\,\,\,=\;\,\left[f(r_t, t)-g^2(t)\nabla_{r_t}\log p_t(r_t\mid \tilde{G}, V, \mathcal{F})\right]dt+g(t)d\bar{W}_t \\
    &\overset{\text{Bayes}}{=}\Big[f(r_t, t)-g^2(t)(\underbrace{\nabla_{r_t}\log p_t(r_t\mid V, \mathcal{F})}_{\text{(a)}} + \underbrace{\nabla_{r_t}\log p_t(\tilde{G}\mid r_t, V, \mathcal{F})}_{\text{(b)}})\Big]dt+g(t)d\bar{W}_t
\end{split}
\label{eq:reverse_cond_diffusion}
\end{equation}
In \eqref{eq:reverse_cond_diffusion}, $f(r_t, t) = -\frac{1}{2}\beta_tr_t$ and  $g(t)=\sqrt{\beta_t}$ align with the reverse SDE of diffusion-based PC-Aware models in \cref{eq:sde_pcaware} and $r_0=R$. Thus, two $r_t$ in \eqref{eq:reverse_cond_diffusion} and \cref{eq:sde_pcaware} are equivalent.

We have (a) and (b) in \cref{eq:reverse_cond_diffusion}, not trivial to compute. To calculate  (a), we apply inpainting trick~\citep{lugmayr2022repaint} on pretrained score network of PC-Aware model $s_{\theta^\star}(t, v_t, r_t \mid \mathcal{F})\approx \nabla_{v_t, r_t}\log p_t(v_t, r_t|\mathcal{F})$ in \cref{subsec:diffusion-basedpc-awaremodel} to implement score network additionally conditioned on $V$ without additional training: 
\begin{equation}
    s_{\theta^\star}(t, r_t \mid V, \mathcal{F})\approx \nabla_{r_t}\log p_t(r_t|V, \mathcal{F})
    \label{eq:inpaintingterm}
\end{equation}
The detailed operation of this estimator is described in \cref{app:conditionalscoreestimator}. As for (b), since $p_t(\tilde{G}\mid r_t, V, \mathcal{F}) = \mathbb{E}_{r_0\mid r_t, V, \mathcal{F}}\left[p_t(\tilde{G}\mid r_0, V, \mathcal{F})\right] = \mathbb{E}_{r_0\mid r_t, V, \mathcal{F}}\left[p_t(\tilde{E}, \tilde{R}\mid r_0)\right]$ as in \cref{eq:posterior_conformation}, we adapt the DPS approximation\cite{dps} for our likelihood in \eqref{eq:likelihoodofbondconformation}, obtaining:
\begin{equation}
    \mathbb{E}_{r_0\mid r_t, V, \mathcal{F}}\small[p_t(\tilde{E}, \tilde{R}\mid r_0)\small] \approx p(\tilde{E},\tilde{R}\mid \hat{r}),
    \label{eq:linkerdpsterm}
\end{equation}
where $\hat{r} = \mathbb{E}_{r_0\mid r_t, V, \mathcal{F}}\left[r_0\right]$ is tractable. Now, (b) is computable through simple additional steps. We describe the adaptation of the DPS approximation in \cref{app:adaptationofdpsapproximation} and detailed computation of \eqref{eq:reverse_cond_diffusion} using \eqref{eq:inpaintingterm} and \eqref{eq:linkerdpsterm} in \cref{app:computationofloglikelihood}. 

Finally, we apply \textbf{ancestral sampling} \cite{ancestralsampling} to sample from the reverse SDE. The sampled conformation is concatenated with $ V $ to construct the point cloud, which is subsequently converted into a complete molecular structure using the post-hoc bond predictor $ \mathcal{E} $, following the PC-Aware model. The full algorithm for LinkerDPS is provided in \cref{app:algorithmforlinkerdps}.
\vspace{-0.1in}
\section{Experiments}
\vspace{-0.1in}
\subsection{Experiment Setup}
\vspace{-0.1in}
\label{subsec:experiments_experimentalsetup} 
We evaluate linker generation algorithms on for 400 fragment-linker pairs from ZINC-250K\cite{zinc25k}, which are the test data in prior linker generation work \cite{imrie2020delinker}. Further, we run 50 times sampling for each fragment and use 20,000 samples in total for our experiments.To evaluate the validity of generated samples, we use the metric \textbf{Validity}, based on the extended definition that considers the coherence between bonding topology and 3D conformation alongside traditional topological rules, as described in~\cref{app:validity}. To score the diversity of samples, we use the followings to capture diversity in multiple perspective: \textbf{Uniqueness}, \textbf{Novelty}, \textbf{V+U}, \textbf{V+N}, \textbf{V+HD}, \textbf{V+FG}, and \textbf{V+BM}. Uniqueness and Novelty accounts for all samples satisfying topological valence rule, but the others measures the diversity of valid molecules by counting only the valid samples. More details of evaluation metrics in ~\cref{app:metrics}. Furthermore, we present the results of additional experiments on ablation study and descriptor optimization, in \cref{app:ablationstudy} and \cref{app:descriptoroptimization}.

For baseline comparisons, we use all pretrained PC-Free and PC-Aware models trained on the ZINC dataset: FFLOM\cite{jin2023fflom} and DeLinker\cite{imrie2020delinker} for PC-Free models, and 3DLinker\cite{huang2022_3dlinker} and DiffLinker\cite{igashov2024difflinker} for PC-Aware models. We evaluate two implementations of HybridLinker, HybridLinker(FFLOM) and HybridLinker(DeLinker), both utilizing DiffLinker as the diffusion-based PC-Aware model while incorporating FFLOM and DeLinker as their respective PC-Free models. For all approaches, we use ETKDGv3\cite{etkdg} algorithm provided by RDkit\cite{rdkit} as conformation predictor $\mathcal{R}$, and Obabel\cite{obabel} as post-hoc bond predictor $\mathcal{E}$. For further implementation details, see \cref{app:implementationofbaselinesandhybridlinker}.

\vspace{-0.1in}
\subsection{Results}
\vspace{-0.1in}
\label{sec:results}
\paragraph{Advanced Diversity of Valid Molecules.} We run quantitative comparison of each linker generation algorithm in \cref{tab:comparisondiversity},  demonstrating that HybridLinker successfully balance both high validity and diversity while all the baselines fail. The high scores in all diversity counting only valid moleucles (V+U, V+N, V+HD, V+FG, V+BM), indicate that its diversity is driven by meaningful substructural variations. Notably, the high V+N score highlights HybridLinker’s ability to discover non-trivial linkers for given fragments. When using DeLinker as the surrogate generator, HybridLinker achieves the highest validity, while utilizing surrogates from FFLOM boosts diversity while maintaining competitive validity. Importantly, HybridLinker also maintains high Uniqueness and Novelty scores, showing that its superior diversity of valid molecules stems from a balanced enhancement of both validity and diversity.
\vspace{-0.1in}



\begin{table}[t]
\vspace{-0.4in}
  \centering
  \scriptsize
  \setlength{\tabcolsep}{3pt}
  \caption{Comparison on Drug-Likeness Optimization of Reference Molecule. We report the percentage of input fragments in ZINC test data for which the model discovers the molecules surpassing the reference in drug-likeness metrics (lower QED, higher SA, lower PLogP, and their combinations).}
  \vspace{-0.05in}
  \label{tab:comparisondruglikeness}
  \resizebox{\textwidth}{!}{%
    \begin{tabular}{@{}lccccccc@{}}
      \toprule
      Model                         & QED (\%) & SA (\%) & PLogP (\%) & Q\&S (\%) & S\&P (\%) & Q\&P (\%) & Q\&S\&P (\%) \\
      \midrule
      DiffLinker                    & 64.7     & 46.2     & 77.5     & 29.5     & 36.0     & 38.7     & 17.0 \\
      \midrule
      FFLOM                         & 60.7     & 42.7     & 83.5     & 21.5     & 35.5     & 38.7     & 11.7 \\
      \rowcolor[HTML]{D9EAD3} HybridLinker (FFLOM) & 74.0 (+21.8\%) & 51.0 (+19.3\%) & 85.7 (+2.6\%) & 30.5 (+41.8\%) & 41.7 (+17.6\%) & 48.5 (+25.1\%) & 17.7 (+51.0\%) \\
      \midrule
      DeLinker                      & 74.2     & 59.7     & 83.2     & 41.5     & 49.2     & 49.7     & 24.7 \\
      \rowcolor[HTML]{D9EAD3} HybridLinker (DeLinker) & \textbf{76.7 (+3.3\%)}  & \textbf{61.7 (+3.3\%)}  & \textbf{84.7 (+1.8\%)}  & \textbf{43.2 (+4.2\%)}  & \textbf{52.2 (+6.0\%)}  & \textbf{54.0 (+8.5\%)}  & \textbf{27.0 (+9.0\%)} \\
      \bottomrule
    \end{tabular}%
  }
\vspace{-0.25in}
\end{table}

\paragraph{Application: Drug-Likeness Optimization. }
Drug-likeness optimization, which aims to identify molecules with better drug-likeness than reference design is a direct application benefitted by sample diversity. In \cref{tab:comparisondruglikeness}, we demonstrate that the high diversity of HybridLinker enhances drug-likeness optimization results. In this experiment, we evaluate each linker generation algorithm regard to the success rate of drug-likeness optimization over the fragments-reference drug pairs in ZINC test dataset. To clarify, for each pair, the algorithm "success" in optimization if the generated samples contain molecule with drug-likeness score better than the reference molecule. We utilize three widely recognized drug-likeness scores—QED, SA, and PLogP—where higher values for QED and PLogP, and lower values for SA indicate better drug-likeness, and their combinations for stricter evaluation. The result highlights HybridLinker’s strong performance, specifically, HybridLinker (DeLinker) outperfoms baselines in all scores while HybridLinker (FFLOM) significantly enhanced the performance of FFLOM.

\vspace{-0.1in}
\subsection{Discussion}
\vspace{-0.1in}
\paragraph{HybridLinker as a Foundational Model}
As shown in \cref{tab:comparisondiversity}, HybridLinker excels in linker generation by simultaneously achieving high diversity and validity—two fundamental pillars of drug discovery. Beyond its strong performance in core metrics, it also demonstrates exceptional results in application-driven tasks, as highlighted in \cref{tab:comparisondruglikeness}. Moreover, as a zero-shot framework that seamlessly integrates pretrained PC-Free and PC-Aware models, HybridLinker remains highly adaptable to future advancements in these models, further enhancing its capabilities. These strengths establish HybridLinker as a foundational framework for linker generation, offering both versatility and long-term scalability in molecular design.
\vspace{-0.1in}
\paragraph{Versatility of LinkerDPS}
At the core of HybridLinker is LinkerDPS, the first DPS approach to enable cross-domain guidance from molecular topology to molecular point clouds. By decomposing molecule generation into two sequential tasks, LinkerDPS facilitates the seamless integration of specialized models for each domain. We anticipate that its capability will extend to a broader range of applications, particularly in tackling complex challenges at the intersection of topology and point cloud modeling by leveraging domain-specific expertise. Notably, large-scale molecule generation—one of the most pressing challenges in modern drug discovery due to its complexity—could become a new frontier for LinkerDPS, offering a scalable solution to this critical problem.         

We provide discussion on the impact of balancing diversity and validity in drug discovery and
the impact of surrogate quality on HybridLinker’s performance in \cref{app:additionaldiscussiononexperimentalresutls}.

\vspace{-0.2in}
\section{Conclusion}
\vspace{-0.1in}
We introduce HybridLinker, a zero-shot framework that integrates pretrained PC-Free and PC-Aware models to balance diversity and validity without additional training. At its core, LinkerDPS enables cross-domain guidance from molecular topology to point clouds, decomposing molecular sampling into sequential subtasks. Our results validate effectiveness of LinkerDPS as well as HybridLinker, and we anticipate its application in scalable large-molecule generation by breaking complex problems into manageable steps.

\newpage



\bibliographystyle{unsrtnat}
\bibliography{main}

\newpage
\appendix
\onecolumn

\newpage
\section{Notation Table}
\label{app:notationtable}

\begin{table}[ht]
  \caption{Notation table for molecular graph generation and related concepts.}
  \centering
  \renewcommand{\arraystretch}{1.0}
  \begin{tabularx}{\textwidth}{>{\raggedright\arraybackslash}p{4cm} X}
    \toprule
    \textbf{Symbol} & \textbf{Description} \\
    \midrule
    $G$ & A complete 3D molecular graph
          $G = (V, E, R) = (\mathcal{T}, R)$, where the number of atoms,
          $N$, is usually fixed to a reference value $N_\text{ref}$. \\
    \midrule
    $V \in \{0,1\}^{N \times A}$ & A list of atoms in $G$. $A$ is the number of atom types, and $N$ is the number of atoms. \\
    \midrule
    $E \in \{0,1\}^{N \times N \times B}$ & An adjacency tensor representing chemical bonds in $G$. \\
    \midrule
    $R \in \mathbb{R}^{N \times 3}$ & A 3D coordinate matrix specifying the coordinates of each atom. \\
    \midrule
    $\mathcal{T}$ & The bonding topology (2D representation) $\mathcal{T} = (V, E)$ which omits spatial information. \\
    \midrule
    PC & The point cloud representation $(V, R)$, which omits bond information. \\
    \midrule
    $\mathbb{P}_G = \mathbb{P}_{\mathcal{T}, R}$ & Distribution of valid 3D molecular graphs,
      or the joint distribution over bonding topology and 3D conformation. \\
    \midrule
    $\mathcal{E}(\mathcal{F}, V, R)$ & Post-hoc bond predictor conditioned on the fragment condition $\mathcal{F}$. \\
    \midrule
    $\mathcal{E}(\emptyset, V, R)$ & Unconditioned post-hoc bond predictor. \\
    \midrule
    $\mathcal{R}(\mathcal{T}, R_1, R_2) \in \mathbb{R}^{N \times 3}$ &
      Off-the-shelf conformation generator given the topology $\mathcal{T}$
      and fragment conformations $R_1$, $R_2$. \\
      \midrule
    $\mathcal{T}_\text{cond}, R_\text{cond}$ & Given fragments
      $G_1 = (\mathcal{T}_1, R_1)$ and $G_2 = (\mathcal{T}_2, R_2)$: \\
    & \quad $\mathcal{T}_\text{cond} = \{ G = (\mathcal{T}, R) : \mathcal{T}_1 \cup \mathcal{T}_2 \subset \mathcal{T} \}$, \\
    & \quad $R_\text{cond} = \{ G = (\mathcal{T}, R) : R_1 \cup R_2 \subset R \}$. \\
    \midrule
    $\mathcal{F}$ & Fragment condition:
      $\mathcal{F} = \mathcal{T}_\text{cond} \cap R_\text{cond}
      = \{ G = (\mathcal{T}, R) : \mathcal{T}_1 \cup \mathcal{T}_2 \subset \mathcal{T},\ R_1 \cup R_2 \subset R \}$. \\
      \midrule
    $\mathbb{P}_G(G \mid \mathcal{F})$ &
      $\mathbb{P}_\mathcal{T}(\mathcal{T} \mid \mathcal{F})
      \cdot \mathbb{P}_{R \mid \mathcal{T}}(R \mid \mathcal{T}, \mathcal{F})$:
      true distribution used in PC-aware linker generation with spatial information. \\
      \midrule
    $\mathbb{P}_\text{2D}(G \mid \mathcal{F})$ &
      $\mathbb{P}_\mathcal{T}(\mathcal{T} \mid \mathcal{T}_\text{cond})
      \cdot \mathbb{P}_{R \mid \mathcal{T}}(R \mid \mathcal{T}, \mathcal{F})$:
      surrogate distribution used in PC-free linker generation without spatial information. \\
      \midrule
    $\mathcal{U}$ & Molecular potential energy. \\
    \midrule
    $s_{\theta^*}$ & Score estimator with pretrained parameter $\theta^*$. \\
    \midrule
    \textbf{T} & A transformation in $\mathrm{SE}(3)$. \\
    \bottomrule
  \end{tabularx}
\end{table}

\newpage
\section{Detailed Descriptions for Molecular Representations}
\label{app:detaileddescriptionsformolecularrepresentations}
A molecule is commonly represented as a \textbf{molecular graph}, where nodes correspond to atoms, and edges represent covalent bonds\cite{huang2024learning}. We formalize this below.

\begin{definition}[3D Molecular Graph]
\label{def:representation}
    A \textbf{3D molecular graph} of $N$ atoms is defined as a triplet $ G = (V, E, R) $, where:
    \begin{itemize}
        \item $ V \in \{0,1\}^{N \times A} $ is the one-hot encoded list of \textbf{atoms} in the molecule where $ A $ is the number of atom types.
        \item $ E \in \{0,1\}^{N \times N \times B} $ is the set of \textbf{chemical bonds}, represented as an adjacency tensor where $ B $ is the number of bond types.
        \item $ R \in \mathbb{R}^{N \times 3} $ denotes the \textbf{3D conformation}, specifying the spatial coordinates of each atom.
\end{itemize}
\end{definition}
We also introduce two key sub-representations:
\begin{itemize}
    \item The \textbf{bonding topology}, denoted as $ \mathcal{T} = (V, E) $, which captures the connectivity of atoms without considering spatial information.
    \item The \textbf{point cloud representation}, given by $ (V, R) $, which encodes atom types and 3D positions but omits bond information.
\end{itemize}
Thus, a 3D molecular graph can be equivalently expressed as $ G = (\mathcal{T}, R) $, combining both connectivity and geometry. We write $ \mathbb{P}_G = \mathbb{P}_{\mathcal{T},R} $ to denote the distribution (dataset) of valid 3D molecular graphs which can be seen as the joint distribution over the bonding topology and 3D conformation.

\paragraph{Point Cloud Representation for 3D Molecule Diffusion Models}
Recent molecular diffusion models\cite{edm,igashov2024difflinker,fragdiff,sbdddiff,diffdock} represent molecules as point clouds $ (V, R) $, disregarding explicit chemical bonds ($ E $) during the denoising steps. In these methods, the discrete point cloud representation is embedded in a continuous space $ \mathbb{R}^d $, where $ d = N (A + 3)$ accounts for both atom types (one-hot encoded) and spatial coordinates. Diffusion denoising steps operate in $ \mathbb{R}^d $, gradually refining the continuous representation. After sampling, the discrete atom representation is recovered using an \texttt{argmax} operation on $ V $. Chemical bonds are inferred via a post-hoc bond predictor $\mathcal{E}$ as $ E = \mathcal{E}(\mathcal{F}, V, R) $ or $\mathcal{E}(\emptyset, V,R)$ depending on fragment conditions $\mathcal{F}$, producing $ G = (V, E, R) $.

\newpage
\section{Extended Definition of Validity}
\label{app:validity}
In this work, we extend the previous definition of molecule validity, focusing on each topology representation, to reflect its point cloud representations, which is a crucial consideration to make the the validity indicates its stability in 3D space where they serve a drug.  
We accounts for not only the valence rule in molecular topology but also the coherence of molecular point cloud with the topology. Molecular potential energy $\mathcal{U}$ is adopted to score the pair of topology and point cloud. We determine the validity of given molecule based on its strain(conformation) energy, the gap of its energy the its optimal energy attained in its most stable conformation without external constraints as fragments condition. 

To clarify, higher conformation energy means the molecule is unstable, also can be interpreted by Boltzmann distribution. \textbf{In our setup of linker generation, the output molecule can be unstable (high conformation energy) if the generated linker is not appropriate for the given fragments' 3D pose condition, i.e., some bonds should be stretched to connect fragments.} We choose the widely used cut-off $25 \text{(kcal / mol)}$\cite{threshold25_1,threshold25_2,threshold25_3} of practically allowable conformation energy as our threshold $\tau^{\text{val}}$, and label the molecule valid if its conformation energy is lower than the $\tau^{\text{val}}$. We use RDKit\cite{rdkit} to compute $\mathcal{U}$ and molecule's optimal conformation.  

\begin{figure*}[!ht]
  \centering
  \includegraphics[width=0.7\linewidth]{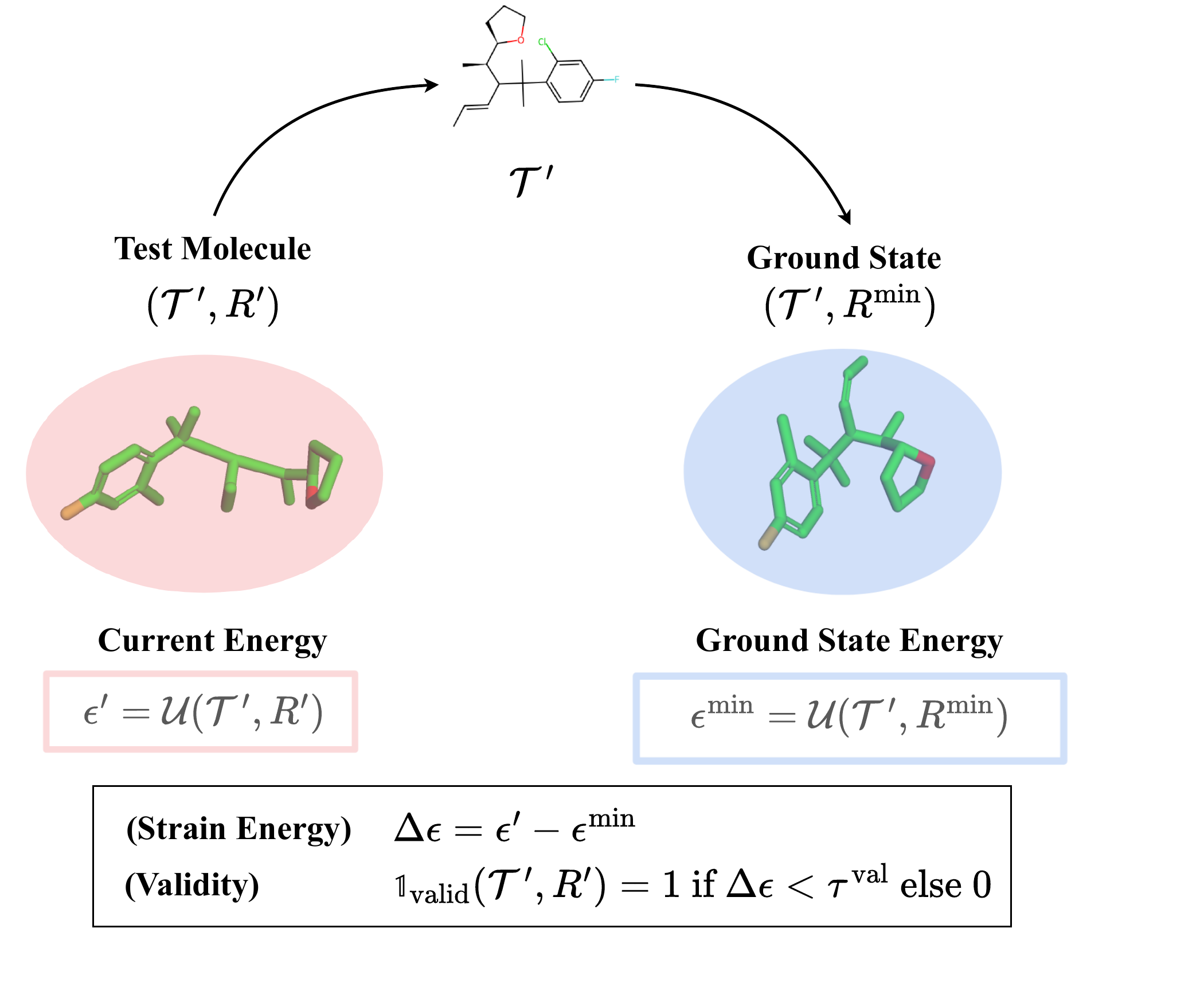}
  \label{fig:extendedvalidity}
  \caption{Extended definition of molecular validity, accounting for its point cloud feature as well as topological valence rule via utilizing strain energy.}
\end{figure*}

\newpage
\section{Experimetal Details}
\label{app:experimentaldetails}
\subsection{Dataset Construction}
\label{app:datsaetconstruction}
 Following DeLinker\cite{imrie2020delinker}, we select a subset of 250,000 molecules from the ZINC dataset. The energetically stable conformation of each molecule is determined using MMFF force field optimization\cite{mmff} implemented in RDKit\cite{rdkit}. As in prior work\cite{hussain2010computationally}, molecular fragmentation is performed by applying double bond cuts on acyclic single bonds that are not part of functional groups. Following the preprocessing steps in DeLinker\cite{imrie2020delinker}, we construct 418,797 fragment-linker pairs and further select 400 pairs as our test set, aligning with the test set used in DeLinker\cite{imrie2020delinker}. In this study, we exclusively use the test set to evaluate HybridLinker, as our method does not require additional training but instead operates in a zero-shot manner using pretrained models.

\subsection{Implmentation of Baselines and HybridLinker}
\label{app:implementationofbaselinesandhybridlinker}
We compare HybridLinker with recent baselines for linker generation, falls into either PC-Free models or PC-Aware models. FFLOM\cite{jin2023fflom} and DeLinker\cite{imrie2020delinker} are included to represents PC-Free models. For the PC-Free models, we adopt the strong and easy-to-use conformation generation algorithm, ETKDGv3\cite{etkdg}, to predict conformation of molecule topology they generate. For PC-Aware models, we include 3DLinker\cite{huang2022_3dlinker} and DiffLinker\cite{igashov2024difflinker} as baseline models. 

HybridLinker is implemented utilizing both pretrained PC-Free models and PC-Aware models, respectively. As far we know, DiffLinker is the only strong PC-Aware baseline, and we adopt it in HybridLinker's second stage. As for the PC-Free model in the first step, we utilze FFLOM and DeLinker, as the two are only strong PC-Free baselines. In the experiment, we refer the PC-Free model in HybridLinker's first stage as surrogate generator. Note that we do not conduct additional training of baselines. Further, we found that if surrogate molecule $\tilde{G}$ from the first stage is already valid, skipping second stage to sample $G$ but directly sample $G = \tilde{G}$ is effective to achieve high performance. We follow this scheme in our experiments.

Further, we use ETKDGv3\cite{etkdg} algorithm provided by RDkit\cite{rdkit} as the off-the-shelf conformation predictor $\mathcal{R}$, and Obabel\cite{obabel} as post-hoc bond predictor $\mathcal{E}$. 

\subsection{Evaluation Metrics}
\label{app:metrics}
In \cref{sec:results}, we first evaluate the samples generated by each algorithm based on their validity and diversity. Notably, we extend the definition of \textbf{Validity} from prior works to account for both valence rules in bonding topology and energetic stability in the 3D graph. To evaluate sample diversity, we incorporate standard metrics such as \textbf{Uniqueness} and \textbf{Novelty}. However, unlike prior works that define Novelty as the fraction of novel molecules, we compute it as the fraction of molecules that are both unique and novel. Additionally, we introduce three more diversity metrics introduced in \cite{hamdiv}—\textbf{HamDiv, FG, BM}—to capture diversity from different perspectives, specifically, molecule fingerprint, number of unique functional group, and number of unique molecular scaffolds, respectively. We follow the implementation in the repository (https://github.com/HXYfighter/HamDiv) to caculate them. We also heavily focus on calculating each diversity score counting only the valid molecules (diversity metrics for valid molecules).In the main paper, we denote them as V+U (Uniqueness), V+N (Novelty), V+HD (HamDiv), V+FG (FG), and V+BM (BM). For instance, the calculation of V+U for a given fragment pair differs from that of Uniqueness as follows:
\begin{equation} 
\begin{split} 
\text{Uniqueness} &= \frac{|\mathcal{S}|}{N}, \\ 
\text{V + U} &= \frac{|\{\mathcal{T} \mid \mathcal{T} \in \mathcal{S},\; \mathcal{T} \text{ is valid}\}|}{N}, 
\end{split} 
\label{eq:calc_v+u} 
\end{equation}
where $ \mathcal{S} = \{\mathcal{T}_i\}_{i=1}^N $ represents the set of topologies for the $ N $ generated molecules corresponding to the fragment pair. The other diversity metrics for valid molecules are computed in a similar manner. All metrics above are first calculated for 50 samples for each fragments, and averaged over the test fragments. 

We also evaluate the drug-likeness and chemical properties of the generated samples. In \cref{sec:results}, we compare the drug-likeness of samples from each algorithm. For drug-likeness scoring, we use \textbf{QED}\cite{qed}, \textbf{SA}\citep{sa}, and \textbf{PLogP}\cite{plogp}, which measure general drug quality, synthetic accessibility, and the octanol-water partition coefficient penalized by the synthetic accessibility score, respectively.


\newpage
\section{Algorithmic Comparison on PC-Aware model, PC-Free model, and HybridLinker}
\label{app:algorithmiccomparisononpcawaremodel}
Here, we illustrate the two-step generation pipelin of HybridLinker (\cref{fig:pipeline}) and algorithmic distinction of HybridLinker compared to PC-Aware and PC-Free models(\cref{alg:pc-awarelinkergenerationmodel}, \cref{alg:pc-freelinkergenerationmodel}, \cref{alg:hybridlinker}). While PC-Aware and PC-Free models perform either PC-Aware or PC-Free inference, specializing in validity or diversity respectively, HybridLinker integrates both inference types in two-step pipeline, effectively leveraging the strengths of both approaches.

\begin{figure*}[h]
\vspace{-0.2in}
  \hspace{-5pt}
  \includegraphics[width=0.9\linewidth]{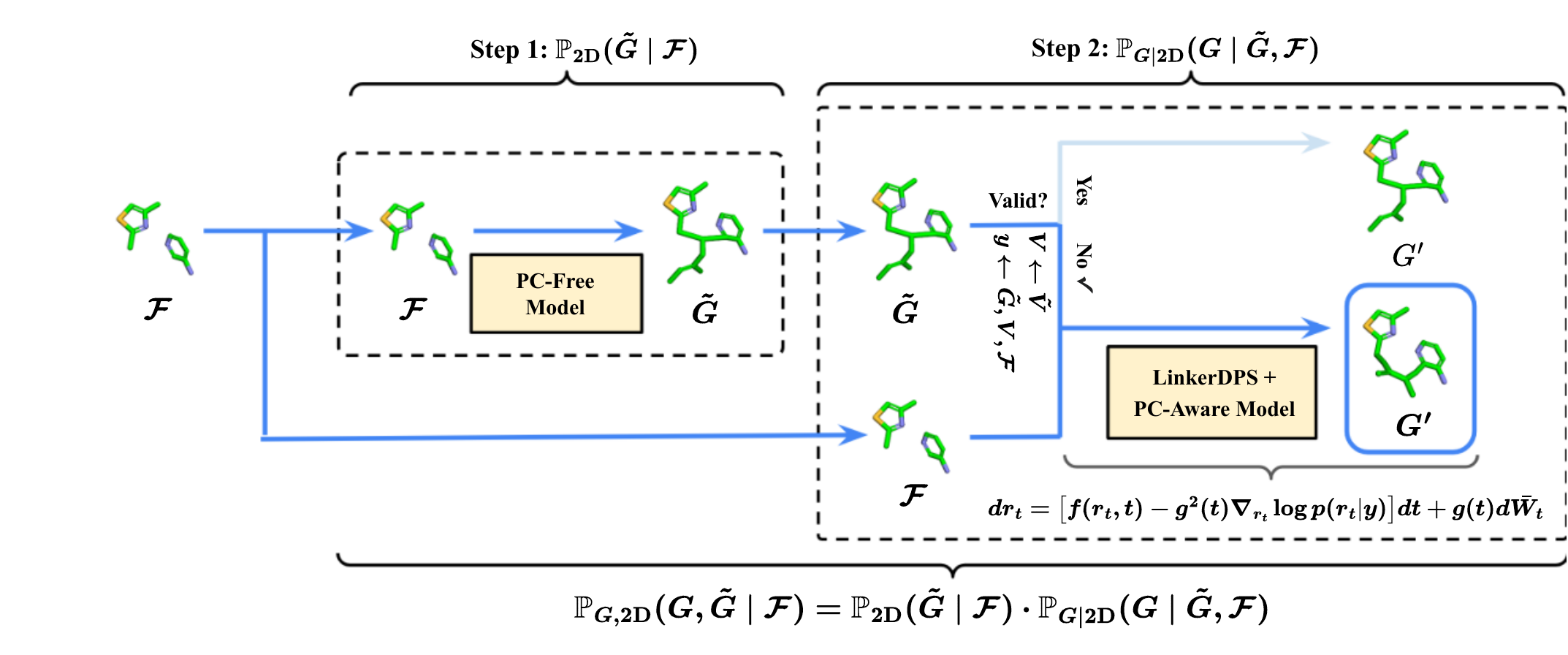}
  \vspace{-0.1in}
  \caption{Two-step generation pipeline of HybridLinker. (Step 1.) diverse molecular samples are generated using pretrained PC-Free models. (Step 2.) posterior sampling is performed for each molecule from the previous step. If a molecule is invalid, LinkerDPS is applied to a diffusion-based PC-Aware model to refine it into a valid structure while maintaining similarity to the guiding molecule. Valid molecules from the first step are directly used as the final output.}
  \label{fig:pipeline}
  \vspace{-0.2in}
\end{figure*}

\begin{algorithm}[H]
  \caption{PC-Aware Linker Generation Model (Diffusion-based)} \label{alg:pc-awarelinkergenerationmodel}
  \small
  \begin{algorithmic}[1]
    \STATE \textbf{INPUT} fragments $G_1=(\mathcal{T}, R_1)=(V_1, E_1, R_1)$, $G_2=(\mathcal{T}_2, R_2) = (V_2, E_2, R_2)$, PC-Aware condition $\mathcal{F}=G_1\cap G_2$, pretrained score estimator $s_{\theta^*}$, bond predictor $\mathcal{E}$, variance level $\{\bar{\beta_t}=1 - \exp\small(-\int^t_0\beta_u du\small)\}^T_{t=1}$
    \STATE $x_T \sim \mathcal{N}(0, I)$ \hfill \text{$\triangleleft$ run reverse process for atom generation}
    \FOR{$t=T, \dotsc, 1$} 
      \STATE $\hat{s} \leftarrow s_{\theta^*}(t, x_t \mid \mathcal{F})$ \hfill \text{$\triangleleft$ estimate score $\nabla_{x_t}\log p_t(x_t)$ } 
      \STATE $z \sim \mathcal{N}(0, I)$ 
      \STATE $\hat{x} \leftarrow \frac{1}{\sqrt{1 - \bar{\beta_t}}}(x_t + \bar{\beta_t}\hat{s})$ \hfill \text{$\triangleleft$ compute $\hat{x} \approx \mathbb{E}_{x_0\mid x_t)}[x_0]$}
      \STATE $x_{t-1} \leftarrow \frac{\sqrt{\alpha_t}(1 - \bar{\alpha}_{t-1})}{1 - \bar{\alpha}_t}x_t + \frac{\sqrt{\bar{\alpha}_{t-1}}\beta_t}{1 - \bar{\alpha}_t}\hat{x} + \sqrt{\bar{\beta_t}}z$ \hfill \text{$\triangleleft$ update $x_{t-1}$}
    \ENDFOR
    \STATE $(v', R') \leftarrow x_0$
    \STATE $V' \xleftarrow[]{\text{argmax}} v'$
    \STATE $E' \leftarrow \mathcal{E}(V', R')$
    \STATE $G' \leftarrow (V', E', R')$
    \STATE \textbf{return} $G'$
  \end{algorithmic}
\end{algorithm}

\begin{algorithm}[H]
  \caption{PC-Free Linker Generation Model} \label{alg:pc-freelinkergenerationmodel}
  \small
  \begin{algorithmic}[1]
    \STATE \textbf{INPUT} fragments $G_1=(\mathcal{T}, R_1)=(V_1, E_1, R_1)$, $G_2=(\mathcal{T}_2, R_2) = (V_2, E_2, R_2)$, PC-Free condition $\mathcal{T}_\text{cond}=\mathcal{T}_1\cap\mathcal{T}_2$, PC-Aware condition $\mathcal{F}=G_1\cap G_2$, parameterized function for PC-Free model $f_{\theta^F}$, conformation predictor $\mathcal{R}$
    \STATE $ \mathcal{T}' \leftarrow f_{\theta^F}(\mathcal{T}_\text{cond})$ \hfill \text{$\triangleleft$ sample topology from topology condition: $\mathcal{T}' \sim \mathbb{P}_{\mathcal{T}\mid\mathcal{T}_\text{cond}}$ }
    \STATE $R' \leftarrow \mathcal{R}(\mathcal{T}', \mathcal{F})$ \hfill \text{$\triangleleft$ conformation prediction $R' \sim \mathbb{P}_{R\mid\mathcal{T}}(R', \mathcal{T}')$}
    \STATE $G' \leftarrow (\mathcal{T}'\mid R', \mathcal{F})$
    \STATE \textbf{return} $G'$
  \end{algorithmic}
\end{algorithm}

\begin{algorithm}[H]
  \caption{HybridLinker} \label{alg:hybridlinker}
  \small
  \begin{algorithmic}[1]
    \STATE \textbf{INPUT} fragments $G_1=(\mathcal{T}, R_1)=(V_1, E_1, R_1)$, $G_2=(\mathcal{T}_2, R_2) = (V_2, E_2, R_2)$, PC-Free condition $\mathcal{T}_\text{cond}=\mathcal{T}_1\cap\mathcal{T}_2$, PC-Aware condition $\mathcal{F}=G_1\cap G_2$, pretrained score estimator for diffusion-based PC-Aware model $s_{\theta^*}$, parameterized function for PC-Free model $f_{\theta^F}$, conformation predictor $\mathcal{R}$, bond predictor $\mathcal{E}$
    \STATE $\tilde{\mathcal{T}} \leftarrow f_{\theta^F}(\mathcal{T}_\text{cond})$  
    \STATE $\tilde{R} \leftarrow \mathcal{R}(\mathcal{T}', \mathcal{F})$
    \STATE $\tilde{G} \leftarrow (\tilde{\mathcal{T}}, \tilde{R})$ \hfill \text{$\triangleleft$ \text{1st Stage (PC-Free Generation):} $\tilde{G}\sim \mathbb{P}_\text{2D}(\tilde{G}\mid \mathcal{F})$}
    \IF{$\tilde{G}$ is invalid} 
    \STATE $(V', R') \leftarrow \textbf{LinkerDPS}(s_{\theta^*}, \mathcal{F}, \tilde{G})$ \hfill \text{$\triangleleft$ For invalid $\tilde{G}$, run LinkerDPS with} \\ \hfill\text{PC-Aware model to enhance validity}
    \STATE $E \leftarrow \mathcal{E}(V', R')$
    \STATE $G' \leftarrow (V', E', R')$
    \ELSE
    \STATE $G' \leftarrow \tilde{G}$ \hfill \text{$\triangleleft$ For valid $\tilde{G}$, set $G=\tilde{G}$}
    \ENDIF  \hfill \text{$\triangleleft$ \text{2nd Stage (PC-Aware Generation):} $G' \sim \mathbb{P}_{G\mid\text{2D}}(G' \mid \tilde{G}, \mathcal{F})$}
    \STATE \textbf{return} $G'$
  \end{algorithmic}
\end{algorithm}

\newpage
\section{Details of LinkerDPS}
\label{app:Details of LinkerDPS}
\subsection{Overview on Diffusion Posterior Sampling}
\label{app:overviewondiffusionposteriorsampling}
Diffusion Posterior Sampling (DPS)\cite{dps} is a novel method for addressing noisy inverse problems using diffusion models. Diffusion models, which are typically employed for generative tasks, works in reverse process of diffusion forward pass that progressively add noise to data.

\textbf{Formulation of Generative SDE.} Formally, diffusion process is governed by following SDE: 
\begin{equation}
dx_t = f(x_t, t) dt + g(t) dW_t, \quad t \in [0, T]    \label{eq:forwardsde}
\end{equation}

where $f(x_t, t) = -\frac{1}{2}\beta_tx_t$ is the drift coefficient and 
$g(t) = \sqrt{\beta_t}$ is the diffusion coefficient for the noise scheduler $\{\beta_t\}$. Further, $W_t$ is the standard Wiener process and $ dt $ is an infinitesimal timestep.

To recover the data generating distribution, we can use the reverse SDE defined as 
\begin{equation}
dx_t = \left[ f(x_t, t) - g^2(t) \nabla_{x_t} \log p(x_t) \right] dt + g(t) d\bar{W_t}, \quad t \in [0, T]    \label{eq:backwardsde}
\end{equation}
where $ \bar{W} $ is a reversed Wiener process flows backward from $t=1$ to $t=0$. Here, since the score
$\nabla_{x_t} \log p(x_t)$ is intractable, the neural score network  $s_{\theta^*}$ is used for computation where
\begin{equation}
    \theta^* = \arg \min_{\theta} \mathbb{E}_{t \sim u(\epsilon, 1), x_t \sim p(x_t | x_0), x_0 \sim p_{\text{data}}} 
\left[ \| s_\theta(x_t, t) - \nabla_{x_t} \log p(x_t | x_0) \|^2 \right]
\end{equation}
for a very small positive constant $\epsilon \simeq 0$.

\textbf{Conditional Sampling via DPS.} Now, we sample data from the posterior $p(x|y)$ given noisy measurement $y$, where the likelihood of measurement $y$ given $x$ is formulated as $p(y\mid x) \propto \exp(-\frac{\|y - \mathcal{A}(x)\|^2}{\sigma})$ for some $\sigma \in \mathbb{R}$. The authors~\citep{dps} also show the application in several other likelihoods. 

Based on the Bayes' rule, the new score $\nabla_{x_t} \log p(x_t | y)$ is decomposed into 
\begin{equation}
    \nabla_{x_t} \log p(x_t | y) = \nabla_{x_t} \log p(x_t) + \nabla_{x_t} \log p(y|x_t).
\end{equation}

The proposed DPS approximation (\cref{thm:dps} ) makes the intractable term $\nabla_{x_t} \log p(y|x_t)$ computable as 
\begin{equation}
\nabla_{x_t} \log p(y | x_t) \approx \nabla_{x_t} \hat{x}(x_t) \nabla_{x_0} \log p(y | x_0 = \hat{x})
\label{eq:approximatedlikelihood}
\end{equation}

where $
\hat{x} = \mathbb{E}_{x_0\mid x_t}[x_0]$. Note that $\mathbb{E}_{x_0\mid x_t}[x_0]$ can be estimated using the trained score network $s_{\theta^\star}$ as $\frac{1}{\sqrt{1 - \bar{\beta_t}}} \left( x_t + \bar{\beta_t} s_{\theta^*}(x_t, t) \right)$ for $\bar{\beta_t} = 1 - \exp(-\int^t_0\beta_udu)$, since $x_t \sim \mathcal{N}(\sqrt{1 - \bar{\beta_t}}x_0, \bar{\beta_t}I)$ and $s_{\theta^*}(x_t, t)$ estimates $(\sqrt{1 - \bar{\beta_t}}x_0 - x_t) / \bar{\beta_t}$. Now, \eqref{eq:approximatedlikelihood} is computable using the backward-propagation on tractable computation of $\hat{x}$ and $\log p(y | x_0)$. Consequently, the computable SDE for conditional generation from $p(x\mid y)$ becomes 
\begin{equation}
  dx_t = \left[ f(x_t, t) - g^2(t) (\nabla_{x_t} \log p(x_t) + \nabla_{x_t} \hat{x}_0(x_t) \nabla_{x_0} \log p(y | x_0 = \hat{x}))\right] dt + g(t) d\bar{W_t}, \quad t \in [0, T]  
\end{equation}

\newpage
\subsection{DPS Approximation}
\label{app:dpsapproximation}

Exploiting Definition \eqref{def:jensengap}, \cite{dps} proposed \cref{thm:dps} about DPS approximation. 
\begin{definition}[Jensen Gap \cite{gao2017bounds}]  
\label{def:jensengap} 
Let $x$ be a random variable with distribution $p(x)$. For some function $f$ that may or may not be convex, the Jensen gap is defined as
\begin{equation}
     \mathcal{J}(f, x \sim p(x)) = \mathbb{E}[f(x)] - f(\mathbb{E}[x]),
\end{equation}
where the expectation is taken over $p(x)$.
\end{definition}


\begin{definition}
The general form of the forward model of an inverse problem can be stated as
\begin{equation}
    y=\mathcal{A}(x_0)+\epsilon 
\end{equation}
where $y,\epsilon\in\mathbb{R}^n,x_0\in \mathbb{R}^d$ and $\mathcal{A}(\cdot):\mathbb{R}^d\mapsto\mathbb{R}^n$ is the forward measurement operator and $\epsilon$ is the measurement noise.
\label{def:measurement_model}
\end{definition}

\begin{theorem}
\label{thm:dps}
    For the measurement model defined in Definition \eqref{def:measurement_model} with $\epsilon\sim\mathcal{N}(0,\sigma^2I)$, we have 
    \begin{equation}
        p(y\mid x_t)\approx p(y \mid x_0 = \hat{x})
    \end{equation}
    with $\hat{x}=\mathbb{E}_{x_0\mid x_t}[x_0]$ where the approximation error can be quantified with the Jensen gap, which is upper bounded by
\begin{align}
    \mathcal{J} \leq \frac{d}{\sqrt{2\pi\sigma^2}}e^{-1/2\sigma^2}\|\nabla_x\mathcal{A}(x)\| m_1,
\end{align}
where  $\|\nabla_x\mathcal{A}(x)\| := \max_{x}\|\nabla_x\mathcal{A}(x)\|$ and 
$m_1:=\int 
     \|x_0 - \hat{x}\|p(x_0|x_t)\,dx_0$.
\end{theorem}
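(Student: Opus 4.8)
The plan is to recognize the left-hand quantity as a conditional expectation and then read off the approximation error as an instance of the Jensen gap from \cref{def:jensengap}, so that the bound of Gao et al.~\cite{gao2017bounds} applies directly. First I would marginalize over the clean signal, using the graphical structure of the forward/reverse diffusion in which $y$ depends on $x_t$ only through $x_0$ (i.e. $p(y \mid x_0, x_t) = p(y \mid x_0)$), to write
\begin{equation}
p(y \mid x_t) = \int p(y \mid x_0)\, p(x_0 \mid x_t)\, dx_0 = \mathbb{E}_{x_0 \mid x_t}\!\left[p(y \mid x_0)\right].
\end{equation}
Setting $f(x_0) := p(y \mid x_0)$ and recalling $\hat{x} = \mathbb{E}_{x_0 \mid x_t}[x_0]$, the error of the approximation $p(y \mid x_t) \approx p(y \mid x_0 = \hat{x})$ is exactly
\begin{equation}
\mathcal{J} = \mathbb{E}_{x_0 \mid x_t}[f(x_0)] - f\!\left(\mathbb{E}_{x_0 \mid x_t}[x_0]\right),
\end{equation}
which is precisely the Jensen gap of $f$ under the posterior $p(x_0 \mid x_t)$.

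Next I would invoke the Gao et al. bound for the Jensen gap, whose governing quantities are a uniform bound on the gradient of $f$ and the first absolute central moment $m_1 = \int \|x_0 - \hat{x}\|\, p(x_0 \mid x_t)\, dx_0$. In the form relevant here the bound reads $\mathcal{J} \leq \big(\sup_{x}\|\nabla_x f(x)\|\big)\, m_1$, so the task reduces to controlling $\|\nabla_x f\|$. I would then compute the gradient explicitly from the Gaussian likelihood $f(x) = (2\pi\sigma^2)^{-n/2}\exp\!\left(-\|y - \mathcal{A}(x)\|^2 / 2\sigma^2\right)$; by the chain rule,
\begin{equation}
\nabla_x f(x) = \frac{1}{\sigma^2}\, f(x)\, \big(\nabla_x \mathcal{A}(x)\big)^{\!\top}\big(y - \mathcal{A}(x)\big),
\end{equation}
hence $\|\nabla_x f(x)\| \leq \sigma^{-2}\,\|\nabla_x \mathcal{A}(x)\|\, f(x)\,\|y - \mathcal{A}(x)\|$. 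Writing $s = \|y - \mathcal{A}(x)\|$, the residual-dependent factor $f(x)\,s$ is a scalar Gaussian-type profile $s \mapsto c\, s\, e^{-s^2/2\sigma^2}$ whose maximum over $s \ge 0$ is attained at a fixed value; this, together with the definition $\|\nabla_x\mathcal{A}(x)\| := \max_x \|\nabla_x\mathcal{A}(x)\|$ and the dimensional factor $d$ absorbing the multivariate bookkeeping of the Gao et al. estimate, yields the constant $\tfrac{1}{\sqrt{2\pi\sigma^2}} e^{-1/2\sigma^2}$. Substituting the resulting uniform gradient bound into the Jensen-gap inequality gives the claimed
\begin{equation}
\mathcal{J} \leq \frac{d}{\sqrt{2\pi\sigma^2}}\, e^{-1/2\sigma^2}\, \|\nabla_x \mathcal{A}(x)\|\, m_1 .
\end{equation}

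The hard part will be the correct and tight transfer of the Gao et al. Jensen-gap inequality to the multivariate setting: the one-dimensional statement bounds the gap by $\sup|f'| \cdot \mathbb{E}|x-\mu|$, and carrying it to $\mathbb{R}^d$ forces one to be careful about whether the gradient is controlled in operator or Euclidean norm and about the exact origin of the prefactor $d$. A secondary subtlety, worth flagging, is that the statement is genuinely an error bound rather than an identity: the approximation is only as good as $m_1$ is small, and $m_1$ (the posterior spread of $x_0$ given $x_t$) shrinks near the end of the reverse process but is not uniformly small in $t$, so the result should be read as a quantitative control on the DPS approximation rather than an exact equality.
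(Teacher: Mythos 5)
Your strategy is exactly the one the paper relies on — with the caveat that the paper itself never proves \cref{thm:dps} (it is quoted from \cite{dps}); the only proof the paper actually writes out is for the analogue \cref{thm:linkerdpsapproximation}, which proceeds precisely by (i) identifying the approximation error as the Jensen gap of the likelihood under $p(x_0\mid x_t)$ as in \cref{def:jensengap}, (ii) bounding the likelihood's gradient uniformly to obtain a Lipschitz constant (\cref{lemma:1}), and (iii) invoking the Gao et al.\ bound (\cref{prop:jensen_gap_upper_bound}) with $\alpha = 1$ to conclude $\mathcal{J} \le L m_1$. Your marginalization $p(y\mid x_t) = \mathbb{E}_{x_0\mid x_t}[p(y\mid x_0)]$, the Jensen-gap identification, the reduction to $\sup_x\|\nabla_x f(x)\|$, and the gradient formula $\nabla_x f(x) = \sigma^{-2} f(x)\,(\nabla_x\mathcal{A}(x))^{\top}(y-\mathcal{A}(x))$ are all correct and are the same steps.

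The genuine gap is the final step, where the stated constant is supposed to emerge. Carrying your own computation to the end: with $s = \|y - \mathcal{A}(x)\|$ and $f(x) = (2\pi\sigma^2)^{-n/2}e^{-s^2/(2\sigma^2)}$,
\begin{equation*}
    \|\nabla_x f(x)\| \;\le\; \frac{\|\nabla_x\mathcal{A}(x)\|}{\sigma^2}\,(2\pi\sigma^2)^{-n/2}\,s\,e^{-s^2/(2\sigma^2)} \;\le\; \frac{e^{-1/2}}{\sigma\,(2\pi\sigma^2)^{n/2}}\;\max_x\|\nabla_x\mathcal{A}(x)\|,
\end{equation*}
because $s \mapsto s\,e^{-s^2/(2\sigma^2)}$ is maximized at $s=\sigma$ with value $\sigma e^{-1/2}$. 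That is a perfectly valid Lipschitz constant, but it is not the claimed $\frac{d}{\sqrt{2\pi\sigma^2}}e^{-1/(2\sigma^2)}\max_x\|\nabla_x\mathcal{A}(x)\|$: your route produces $e^{-1/2}$ rather than $e^{-1/(2\sigma^2)}$, and the dimension enters through the normalization $(2\pi\sigma^2)^{-n/2}$ and the factor $1/\sigma$, not as a multiplicative $d$ — note that \cref{prop:jensen_gap_upper_bound}, as stated, introduces no dimensional factor at all. So the clause claiming that $d$ ``absorbs the multivariate bookkeeping'' and thereby ``yields'' the prefactor is an assertion, not a derivation, and it cannot be completed along the lines you set up; the stated prefactor of \cite{dps} looks like the result of evaluating the maximizer at $s=1$ instead of $s=\sigma$, and it is telling that the paper's own \cref{lemma:1}, which performs literally this optimization for the LinkerDPS likelihood, ends with $e^{-1/2}$ in its constant. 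In short: your proposal establishes a correct inequality of the advertised form $\mathcal{J}\le K m_1$, which is all the downstream argument actually uses, but it does not establish the theorem with the constant as stated, and the missing step is not mere bookkeeping.
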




\newpage 

\subsection{Adaptation of DPS Approximation}
\label{app:adaptationofdpsapproximation}
Inspired by \cref{thm:dps}, we propose \cref{thm:linkerdpsapproximation} about LinkerDPS approximation exploiting Lemma \eqref{lemma:1} and Proposition \eqref{prop:jensen_gap_upper_bound}. \cref{thm:linkerdpsapproximation} and the subsequent computation \begin{equation}
    \mathbb{E}_{r_0\mid r_t, V, \mathcal{F}}\left[r_0\right]=\frac{1}{\sqrt{1 - \bar{\beta_t}}} \left( r_t + \bar{\beta_t}\cdot s_{\theta^*}(t, r_t \mid V, \mathcal{F}) \right),
\end{equation} which refers \cref{app:overviewondiffusionposteriorsampling} and \eqref{eq:inpaintingterm}, validate \eqref{eq:linkerdpsterm} and make (b) of \eqref{eq:reverse_cond_diffusion} computable.
\\

\begin{theorem}(LinkerDPS Approximation)  
\label{thm:linkerdpsapproximation} 
Suppose we have the cross domain measurement model, 
\begin{equation}
    p(\tilde{E}, \tilde{R} \mid r) =
    \frac{1}{Z} e^{-\left[ \phi_1(r\mid\tilde{R}) / 2\sigma_1^2 + \phi_2(r\mid \tilde{E}) / \sigma_2 \right]}
\label{eq:posterior_model}
\end{equation}
where $\sigma_1, \sigma_2 \in \mathbb{R}$ and $\phi_1, \phi_2$ are defined as
\begin{equation}
\begin{split}
    \phi_1(r\mid\tilde{R}) & := \|\tilde{R} - r\|^2, \\
    \phi_2(r\mid \tilde{E}) & := \sum_{\substack{1 \leq i, j \leq N}} \mathbbm{1}_{\tilde{E}_{i,j} \neq \mathbf{0}} \|r_i - r_j\|. 
\end{split}
\label{eq:energy}
\end{equation}

In addition, $\{r_t\}_{t\in [0, 1]}$ is the diffusion process of conformation $r_0=R$ following the forward SDE $dr_t = -\frac{1}{2}\beta_tr_t + \sqrt{\beta_t}dW$. Then, we have
\begin{equation}
    \mathbb{E}_{r_0\mid r_t, V, \mathcal{F}}\small[p(\tilde{E}, \tilde{R}\mid r_0)\small] \approx p(\tilde{E},\tilde{R}\mid \hat{r}),
    \label{eq:measurementapproximation}
\end{equation}
where $\hat{r} = \mathbb{E}_{r_0\mid r_t, V, \mathcal{F}}\left[r_0\right]$ and the approximation error for \eqref{eq:measurementapproximation} can be quantified with the Jensen gap $\mathcal{J}$, which is upper bounded by
\begin{align}
    \mathcal{J} \leq Lm_1,
\end{align}
where  $L= \frac{1}{Z}\cdot\left(\frac{e^{-1/2}}{\sigma_1} + \frac{N^{3/2}}{\sigma_2}\right)$, and 
$m_1:=\int 
     \|r_0 - \hat{r}\|p(r_0|r_t, V, \mathcal{F})\,dr_0$.
\end{theorem}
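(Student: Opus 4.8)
The plan is to recognize the left-hand side of \eqref{eq:measurementapproximation} as a Jensen gap and to control it through the Lipschitz constant of the likelihood map, mirroring the structure of the DPS approximation in \cref{thm:dps}. Set $f(r) := p(\tilde{E}, \tilde{R} \mid r)$ as in \eqref{eq:posterior_model}. By \cref{def:jensengap}, the error in \eqref{eq:measurementapproximation} is exactly $\mathcal{J} = \mathbb{E}_{r_0 \mid r_t, V, \mathcal{F}}[f(r_0)] - f(\hat{r})$ with $\hat{r} = \mathbb{E}_{r_0 \mid r_t, V, \mathcal{F}}[r_0]$. First I would reduce this gap to a first-moment estimate: writing $\mathcal{J} = \int (f(r_0) - f(\hat{r}))\, p(r_0 \mid r_t, V, \mathcal{F})\, dr_0$ and applying the triangle inequality, if $f$ is $L$-Lipschitz then $|\mathcal{J}| \le L \int \|r_0 - \hat{r}\|\, p(r_0 \mid r_t, V, \mathcal{F})\, dr_0 = L\, m_1$. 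This reduction is the content I would isolate as Lemma~\eqref{lemma:1}.

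The core of the argument is then the Lipschitz constant $L = \sup_r \|\nabla_r f(r)\|$, which I would isolate as Proposition~\eqref{prop:jensen_gap_upper_bound}. Writing $f(r) = \tfrac{1}{Z} e^{-g(r)}$ with $g(r) = \phi_1(r \mid \tilde{R})/2\sigma_1^2 + \phi_2(r \mid \tilde{E})/\sigma_2$, I have $\nabla_r f = -\tfrac{1}{Z} e^{-g(r)} \nabla_r g(r)$, so $\|\nabla_r f\| \le \tfrac{1}{Z} e^{-g(r)}\big( \|\nabla_r \phi_1\|/2\sigma_1^2 + \|\nabla_r \phi_2\|/\sigma_2 \big)$. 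I would bound the two terms separately. For the conformation term, $\nabla_r \phi_1 = 2(r - \tilde{R})$, and since $g(r) \ge \phi_1/2\sigma_1^2 = \|r - \tilde{R}\|^2/2\sigma_1^2$, its contribution is at most $\tfrac{1}{Z}\tfrac{\|r - \tilde{R}\|}{\sigma_1^2} e^{-\|r - \tilde{R}\|^2/2\sigma_1^2}$; substituting $u = \|r - \tilde{R}\|/\sigma_1$ and maximizing $u \mapsto u\, e^{-u^2/2}$ (peak at $u = 1$, value $e^{-1/2}$) yields the clean bound $\tfrac{1}{Z}\, e^{-1/2}/\sigma_1$. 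For the bond term I use $e^{-g(r)} \le 1$ together with a uniform bound on $\|\nabla_r \phi_2\|$, developed next.

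The main obstacle will be the bond-energy gradient $\nabla_r \phi_2$, which demands a two-level norm estimate and care with nonsmoothness. Each summand $\mathbbm{1}_{\tilde{E}_{i,j} \neq \mathbf{0}} \|r_i - r_j\|$ contributes a unit vector $\pm (r_i - r_j)/\|r_i - r_j\|$ to $\nabla_{r_k} \phi_2$ whenever $k \in \{i, j\}$, so $\|\nabla_{r_k} \phi_2\| \le N$ (at most $N$ incident bonds, each a unit vector), and summing over the $N$ atoms gives $\|\nabla_r \phi_2\|^2 = \sum_k \|\nabla_{r_k} \phi_2\|^2 \le N \cdot N^2 = N^3$, hence $\|\nabla_r \phi_2\| \le N^{3/2}$. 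The nonsmoothness at coincident atoms $r_i = r_j$ occurs only on a measure-zero set and can be absorbed by passing to subgradients or by the almost-everywhere characterization of Lipschitz maps. Combining the two bounds gives $L = \tfrac{1}{Z}\big( e^{-1/2}/\sigma_1 + N^{3/2}/\sigma_2 \big)$, and feeding this into the reduction of the first paragraph yields $\mathcal{J} \le L\, m_1$ as claimed.
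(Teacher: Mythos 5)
Your proposal is correct and follows essentially the same route as the paper: both reduce the approximation error to a Jensen gap bounded by (Lipschitz constant) $\times$ (first absolute moment $m_1$), with the Lipschitz constant obtained by splitting $\nabla_r f$ into the $\phi_1$ contribution (bounded via the maximization of $u\,e^{-u^2/2}$ at $u=1$, giving $e^{-1/2}/\sigma_1$) and the $\phi_2$ contribution (bounded via $e^{-g}\le 1$ and $\|\nabla_r\phi_2\|\le\sqrt{N\cdot N^2}=N^{3/2}$). The only differences are cosmetic---you swap which piece is called the lemma versus the proposition---plus your explicit treatment of the nonsmoothness of $\phi_2$ at coincident atoms, a measure-zero issue the paper's proof passes over silently.
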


\begin{proof}[proof of \cref{thm:linkerdpsapproximation}]
\;\\

For $f(r) := p(\tilde{E}, \tilde{R} \mid r)$, the approximation error of \eqref{eq:measurementapproximation} becomes the Jensen gap $\mathcal{F}$ as
\begin{equation}
   \mathcal{J} = |\mathbb{E}_{r_0\mid r_t, V, \mathcal{F}}[f(r_0)] - f(\mathbb{E}_{r_0\mid r_t, V, \mathcal{F}}\left[r_0\right])|
\end{equation}

Let $\hat{r} = \mathbb{E}_{r_0\mid r_t, V, \mathcal{F}}\left[r_0\right]$ and leveraging Lemma \eqref{lemma:1},
\begin{equation}
    |f(r_0) - f(\hat{r})| \leq L\|r_0 - \hat{r}\|
\end{equation}
where $L = \frac{1}{Z}\cdot\left(\frac{e^{-1/2}}{\sigma_1} + \frac{N^{3/2}}{\sigma_2}\right)$.
Then, according to Proposition \eqref{prop:jensen_gap_upper_bound}, $\mathcal{J}$ has upper bound as 
\begin{equation}
    \mathcal{J} \leq Lm_1
\end{equation}
where $m_1=\int 
     \|r_0 - \hat{r}\|p(r_0|r_t, V, \mathcal{F})\,dr_0$. 
\end{proof}

\;\\
\begin{lemma}  
\label{lemma:1} 
Let $f(r) := p(\tilde{E}, \tilde{R} \mid r)$ in \eqref{eq:posterior_model}. There exists a constant $L$ such that $\forall r, r^* \in \mathbb{R}^{3N}$,
\begin{equation}
\label{eq:lipschitzmultivariate}
    |f(r) - f(r^*)| \leq L\|r - r^*\|,
\end{equation}
where $L = \frac{1}{Z}\cdot\left(\frac{e^{-1/2}}{\sigma_1} + \frac{N^{3/2}}{\sigma_2}\right)$.
\end{lemma}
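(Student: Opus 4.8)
The plan is to prove the Lipschitz bound by controlling the gradient of $f$ uniformly and then integrating. Writing $f(r)=\frac{1}{Z}e^{-g(r)}$ with $g(r)=g_1(r)+g_2(r)$, where $g_1(r)=\phi_1(r\mid\tilde{R})/(2\sigma_1^2)$ and $g_2(r)=\phi_2(r\mid\tilde{E})/\sigma_2$, I would first compute $\nabla f(r)=-\frac{1}{Z}e^{-g(r)}\bigl(\nabla g_1(r)+\nabla g_2(r)\bigr)$ and bound $\|\nabla f(r)\|$ independently of $r$. Once $\sup_r\|\nabla f(r)\|\le L$ is established, the desired estimate $|f(r)-f(r^*)|\le L\|r-r^*\|$ follows from the mean value inequality, i.e. by integrating $\nabla f$ along the segment joining $r$ and $r^*$. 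The key idea is that the exponential prefactor can be bounded by a \emph{different} upper bound for each of the two gradient pieces, which is exactly what produces the two summands in $L$.

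For the $g_1$ contribution I would use $e^{-g}\le e^{-g_1}$ (since $g_2\ge 0$). With $u=\|r-\tilde{R}\|$ and $\nabla g_1=(r-\tilde{R})/\sigma_1^2$, this gives $e^{-g}\|\nabla g_1\|\le e^{-u^2/2\sigma_1^2}\,u/\sigma_1^2$, and optimizing the scalar map $u\mapsto (u/\sigma_1^2)e^{-u^2/2\sigma_1^2}$ over $u\ge 0$ (maximizer $u=\sigma_1$) yields the clean bound $e^{-1/2}/\sigma_1$. For the $g_2$ contribution I would instead use $e^{-g}\le 1$ (as $g\ge 0$) and control $\|\nabla g_2\|=\|\nabla\phi_2\|/\sigma_2$ through the block structure of $\phi_2$: each atomic block $\partial\phi_2/\partial r_k$ is a sum of at most $N$ unit vectors $(r_k-r_j)/\|r_k-r_j\|$, so $\|\partial\phi_2/\partial r_k\|\le N$; summing the $N$ squared block norms gives $\|\nabla\phi_2\|\le\sqrt{N\cdot N^2}=N^{3/2}$, hence $e^{-g}\|\nabla g_2\|\le N^{3/2}/\sigma_2$. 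Adding the two contributions together with the $1/Z$ prefactor produces exactly $L=\frac{1}{Z}\bigl(e^{-1/2}/\sigma_1+N^{3/2}/\sigma_2\bigr)$.

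The main obstacle is that $\phi_2$ is not differentiable wherever a bonded pair coincides ($r_i=r_j$), so $\nabla f$ does not exist everywhere and the integration argument must be justified with care. I would handle this by observing that each $\|r_i-r_j\|$ is globally Lipschitz, so $f$ is locally Lipschitz; by Rademacher's theorem $\nabla f$ then exists almost everywhere, the uniform bound $\|\nabla f\|\le L$ holds on that full-measure set, and integrating it along almost-every segment (equivalently, bounding the Clarke subdifferential by $L$) upgrades the a.e. gradient bound to the global Lipschitz estimate \eqref{eq:lipschitzmultivariate}. A secondary point needing attention is the bookkeeping in the $g_2$ step, since the convention for counting ordered versus unordered bonded pairs affects the per-block constant; under the stated indicator sum each atom participates in at most $N$ distance terms, which is precisely what delivers the $N^{3/2}$ factor claimed in the statement.
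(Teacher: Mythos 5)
Your proof takes essentially the same route as the paper's: both arguments bound $\sup_r \|\nabla f(r)\|$ by splitting the gradient into the $\phi_1$ and $\phi_2$ contributions, optimize the scalar map $u \mapsto u\, e^{-u^2/(2\sigma_1^2)}$ to obtain $e^{-1/2}/\sigma_1$, bound $\|\nabla \phi_2\| \le \sqrt{N \cdot N^2} = N^{3/2}$ via the same per-atom block estimate together with $f \le 1/Z$, and conclude by the mean value inequality. The only substantive difference is that you explicitly repair the non-differentiability of $\phi_2$ at coincident bonded atoms (via Rademacher/Clarke), a point the paper's proof passes over silently, so your argument is, if anything, the more complete of the two.
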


\begin{proof}[proof of Lemma \eqref{lemma:1}]
\begin{equation}
    \begin{split}
        \max_{r} \|\nabla_rf(r)\| &= \max_{r} \|f(r)\cdot\left(-\frac{1}{2\sigma_1^2}\cdot \nabla_r \phi_1(\tilde{R}\mid r) -\frac{1}{\sigma_2}\cdot \nabla_r \phi_2(\tilde{E}\mid r)\right)\| \\
        &\leq \max_{r} \|f(r)\cdot\left(-\frac{1}{2\sigma_1^2}\cdot \nabla_r \phi_1(\tilde{R}\mid r)\right)\| + \max_{r}\|f(r)\cdot\left(-\frac{1}{\sigma_2}\cdot \nabla_r \phi_2(\tilde{E}\mid r)\right)\| \\
        &=\max_{r} \|f(r)\cdot\left(-\frac{r - \tilde{R}}{\sigma_1^2}\right)\| + \max_{r}\|f(r)\cdot\left(-\frac{1}{\sigma_2}\cdot \nabla_r \phi_2(\tilde{E}\mid r)\right)\| \\
        &\leq\frac{1}{\sigma_1}\cdot\max_{r} \left(f(r)\cdot\|\frac{r - \tilde{R}}{\sigma_1}\|\right)+ \frac{1}{\sigma_2}\cdot \max_{r}\|f(r)\cdot\nabla_r \phi_2(\tilde{E}\mid r)\| \\
        &\leq\frac{1}{Z}\cdot\left(\frac{e^{-1/2}}{\sigma_1} + \frac{N^{3/2}}{\sigma_2}\right)
    \end{split}
\end{equation}
where the last two inequalities are from
\begin{equation}
    \begin{split}
        \max_{r} \left(f(r)\cdot\|\frac{r - \tilde{R}}{\sigma_1}\|\right)&\leq\max_r \left(\frac{e^{-\|r-\tilde{R}\|^2/(2\sigma_1^2)}}{Z}\cdot \|\frac{r - \tilde{R}}{\sigma_1}\|\right) \\
        &= \max_{z\in\mathbb{R}^+}\frac{z\cdot e^{-z^2 / 2}}{Z}\\
        &= \frac{e^{-1/2}}{Z}
    \end{split}
\end{equation}
and 
\begin{equation}
    \begin{split}
        \max_{r}\|f(r)\cdot\nabla_r \phi_2(\tilde{E}\mid r)\| &\leq \frac{1}{Z}\cdot \max_{r}\|\nabla_r\phi_2(\tilde{E}\mid r)\| \\
        &\leq \frac{1}{Z} \cdot \sqrt{N \cdot (N)^2}\\
        &\leq \frac{N^{3/2}}{Z}
    \end{split}
\end{equation}

Then, we have
\begin{equation}
\label{eq:lipschitzmultivariateproof}
\begin{split}
    |f(r) - f(r^*)| &\leq \max_r\|\nabla_rf(r)\|\cdot \|r - r^*\| \\
    &\leq \frac{1}{Z}\cdot\left(\frac{e^{-1/2}}{\sigma_1} + \frac{N^{3/2}}{\sigma_2}\right)\cdot\|r - r^*\|
\end{split}
\end{equation}
\end{proof}

\begin{proposition}[Jensen gap upper bound\cite{gao2017bounds}]
\label{prop:jensen_gap_upper_bound}
Define the absolute cenetered moment as $m_p := \sqrt[p]{\mathbb{E}[|X - \mu|^p]}$, and the mean as $\mu = \mathbb{E}[X]$. Assume that for $\alpha > 0$, there exists a positive number $K$ such that for any $x \in \mathbb{R}^d, |f(x) - f(\mu)| \leq K|x - \mu|^\alpha$. Then,
\begin{equation}
\begin{split}
    |\mathbb{E}[f(X)] - f(\mathbb{E}[X])| &\leq \int |f(X) - f(\mu)|dp(X)\\
    &\leq K\int |x - \mu|^\alpha dp(X) \leq Km_\alpha^\alpha.
\end{split}
\end{equation}
\end{proposition}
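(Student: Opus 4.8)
The plan is to control the Jensen gap $\mathcal{J} = |\mathbb{E}[f(X)] - f(\mathbb{E}[X])|$ by transferring the absolute value inside the expectation and then invoking the assumed H\"older-type growth condition $|f(x)-f(\mu)| \le K|x-\mu|^\alpha$. The whole argument is elementary and splits into three moves, matching the three inequalities in the stated display: rewriting the gap as the expectation of a centered difference, applying the triangle inequality for integrals, and substituting the pointwise growth bound.

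First I would use that $\mu = \mathbb{E}[X]$ is deterministic, so $f(\mu)$ is constant and linearity of expectation gives $\mathbb{E}[f(X)] - f(\mu) = \mathbb{E}[f(X) - f(\mu)]$. Taking absolute values and applying the triangle inequality for integrals (equivalently, Jensen's inequality for the convex map $t \mapsto |t|$) produces the first bound,
\[
\big|\mathbb{E}[f(X)] - f(\mu)\big| \le \mathbb{E}\big[\,|f(X) - f(\mu)|\,\big] = \int |f(x) - f(\mu)|\,dp(x).
\]

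Next I would substitute the hypothesis pointwise under the integral, using monotonicity of the integral, and then recognize the remaining integral as a centered absolute moment:
\[
\int |f(x) - f(\mu)|\,dp(x) \le K \int |x - \mu|^\alpha\,dp(x) = K\,\mathbb{E}\big[|X - \mu|^\alpha\big] = K\,m_\alpha^\alpha,
\]
where the last equality is just the definition $m_\alpha = \big(\mathbb{E}[|X-\mu|^\alpha]\big)^{1/\alpha}$. Chaining the two displays yields $\mathcal{J} \le K m_\alpha^\alpha$, which is the claim.

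There is no deep obstacle here; the only step that is a genuine inequality rather than bookkeeping is the triangle inequality for integrals, and the substitution step is purely the assumed bound. The points I would be careful about are integrability and well-definedness: I would note that the growth condition together with finiteness of $m_\alpha$ guarantees $|f(X)-f(\mu)|$ is integrable, so every expectation above is finite and the interchange with the constant $f(\mu)$ is justified. Since the result is a restatement of a standard Jensen-gap bound, no additional machinery is needed.
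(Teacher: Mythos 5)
Your proof is correct and follows essentially the same route as the paper, which embeds the argument directly in the displayed inequality chain of the proposition (cited from the Jensen-gap literature): triangle inequality for integrals, pointwise substitution of the growth bound $|f(x)-f(\mu)|\leq K|x-\mu|^{\alpha}$, and identification of $\int |x-\mu|^{\alpha}\,dp(x)$ with $m_\alpha^\alpha$. Your added remark on integrability of $|f(X)-f(\mu)|$ is a reasonable piece of bookkeeping the paper leaves implicit.
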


\newpage
\subsection{Detailed Computation of LinkerDPS}
\label{app:computationofloglikelihood}
We show that reverse SDE for LinkerDPS (\cref{eq:reverse_cond_diffusion}) is tractable as follows.
Substituting  \cref{eq:inpaintingterm} and \cref{eq:linkerdpsterm} respectively in (a) and (b) of \cref{eq:reverse_cond_diffusion}  give
\begin{equation}
\begin{split}
    dx_t &\quad\,= \Big[f(r_t, t)-g^2(t)(s_{\theta^\star}(t, r_t \mid V, \mathcal{F}) + \nabla_{r_t}\log p(\tilde{E},\tilde{R}\mid \hat{r}))\Big]dt+g(t)d\bar{W}_t \\
    &\overset{\text{Chain Rule}}{=} \Big[f(r_t, t)-g^2(t)(s_{\theta^\star}(t, r_t \mid V, \mathcal{F}) + \underbrace{\nabla_{r_t}\hat{r}}_{\texttt{autograd}} \cdot \underbrace{\nabla_{\hat{r}}\log p(\tilde{E}, \tilde{R}\mid \hat{r})}_{\eqref{eq:guidance_phi},\eqref{eq:nabla_phi}})\Big]dt+g(t)d\bar{W}_t
\end{split}
\end{equation}
where $\hat{r} = \mathbb{E}_{r_0\mid r_t, V, \mathcal{F}}\left[r_0\right]=\frac{1}{\sqrt{1 - \bar{\beta_t}}} \left( r_t + \bar{\beta_t}\cdot s_{\theta^*}(t, r_t \mid V, \mathcal{F}) \right)$ for $\bar{\beta_t} = 1 - \exp(-\int^t_0 \beta_u du)$.

We clarify that $\nabla_{r_t}\hat{r} \cdot \nabla_{\hat{r}}\log p(\tilde{E}, \tilde{R}\mid \hat{r})$ is tractable. First, $\nabla_{r_t}\hat{r}$ is computed using \texttt{autograd} after the forward calculation. The latter $\nabla_{\hat{r}}\log p(\tilde{E}, \tilde{R}\mid \hat{r})$ is expressed as 
\begin{equation}
    \nabla_{\hat{r}}\log p(\tilde{E}, \tilde{R}|\hat{r}) = -\nabla_{\hat{r}}\phi_1(\hat{r}\mid \tilde{R})  -\nabla_{\hat{r}}\phi_2(\hat{r}\mid \tilde{E}),
    \label{eq:guidance_phi}
\end{equation}
where $
    \phi_1(\hat{r}\mid\tilde{R}) := \|\tilde{R} - \hat{r}\|^2$ and $
    \phi_2(\hat{r}\mid \tilde{E}) := \sum_{\substack{1 \leq i, j \leq N}} \mathbbm{1}_{\tilde{E}_{i,j} \neq \mathbf{0}} \|\hat{r}_i - \hat{r}_j\|. 
$

Indeed, $\nabla_{\hat{r}}\phi_1(\hat{r}\mid\tilde{R})$ and $\nabla_{\hat{r}}\phi_2(\hat{r}\mid\tilde{E})$ are computed in straightforward as
\begin{equation}
\begin{split}
    \nabla_{\hat{r}_i}\phi_1(\hat{r}\mid\tilde{R}) &= 2(\hat{r}_i - \tilde{R}_i)^T \\ 
    \nabla_{\hat{r}_i}\phi_2(\hat{r}\mid\tilde{E}) &= \sum_{1\leq j\leq N} \frac{ \mathbbm{1}_{\tilde{E}_{i,j} \neq \mathbf{0}}(\hat{r}_i - \hat{r}_j)^T }{\|\hat{r}_i - \hat{r}_j\|},
\end{split}
\label{eq:nabla_phi}
\end{equation}
which makes \eqref{eq:guidance_phi} tractable.

\newpage

\subsection{Conditional Score Estimator}
\label{app:conditionalscoreestimator}
Leveraging pretrained score estimator $s_{\theta^*}$ of PC-Aware model~\citep{igashov2024difflinker}
\begin{equation}
    s_{\theta^*}(t, v_t, r_t, \mid \mathcal{F}) \approx \nabla_{v_t, r_t} \log p_t(v_t, r_t\mid \mathcal{F}),
\end{equation}

for joint sampling of atom type $v_0=V$ and conformation $r_0=R$, we derive a conditional score estimator for sampling of conformation alone under the fixed atom type. Specifically, the conditional score estimator predicts the score of the conditional distribution $p_t(r_t\mid V, \mathcal{F})$.

To achieve this, we treat the conditional sampling as the inpainting problem and adapt the methods of \citep{lugmayr2022repaint} to 3D molecules. To clarify, we formulate the problem as the sampling of molecule $X = (V, R)$, where we know the regions for the atom types $V$ and the remaining regions for conformation $R$ should be sampled via backward diffusion process $\{x_t = (v_t, r_t)\}_{0\le t \le T}$. Since every reverse step from $x_t$ to $x_{t-1}$ depends solely on $x_t$, we can alter the known regions for $V$ as long as we keep the correct properties of the corresponding distribution, which is modeled by
\begin{equation}
p(x_t\mid x_0)=\mathcal{N}(\sqrt{1-\bar{\beta_t}}x_0, \bar{\beta_t}I)
\label{eq:intermediatedistribution}
\end{equation}
for $\bar{\beta_t} = 1 - \exp(-\int^t_0 \beta_u du)$.

Thus, using the pretrained reverse SDE $dx_t = (-\frac{1}{2}\beta_t x_t - \beta_t s_{\theta^*}(t, v_t, r_t, \mid \mathcal{F})dt + \sqrt{\beta_t}d\bar{W_t}$ to sample only the unknown region and \eqref{eq:intermediatedistribution} for the known regions, we get the following reverse SDE for $\{r_t\}_{0\le t \le T}$.
\begin{equation}
    dr_t = \left[-\frac{1}{2}\beta_t r_t - \beta_t (m^R\odot s_{\theta^*}(t, v_t, r_t, \mid \mathcal{F}))\right]dt + \sqrt{\beta_t}d\bar{W_t},
    \label{eq:inpaintedsdeforR}
\end{equation}
where $v_t \sim \mathcal{N}(\sqrt{1 - \bar{\beta_t}}V, \bar{\beta_t}I)$ and $m^R \odot$ is the masking of output scores to collect only the values in unknown regions for $R$.

Comparing \eqref{eq:inpaintedsdeforR} with the true reverse SDE of $r_t$ for sampling from $p_t(r_t\mid V, \mathcal{F})$: $dr_t = \left[-\frac{1}{2}\beta_t r_t - \beta_t \nabla_{r_t}\log p(r_t\mid V, \mathcal{F}))\right]dt + \sqrt{\beta_t}d\bar{W_t}$, we can define the estimator for conditional score $\nabla_{r_t}\log p(r_t\mid V, \mathcal{F})$ appear in the latter as
\begin{equation}
    s_{\theta^*}(t, r_t \mid V, \mathcal{F}) = m^{R}\odot s_{\theta^*}(t, v_t, r_t \mid \mathcal{F}), \quad\text{where}\quad v_t \sim \mathcal{N}(\sqrt{1 - \bar{\beta_t}}V, \bar{\beta_t}I)
    \label{eq:defconditionalscoreestimator}
\end{equation}
where $m^R \odot$ is the masking of output scores to collect only the values in unknown regions for $R$, and it suffices the following:
\begin{equation}
    s_{\theta^*}(t, r_t \mid V, \mathcal{F}) \approx \nabla_{r_t} \log p_t(r_t | V, \mathcal{F}).
    \label{eq:appconditionalscoreestimation}
\end{equation}

\newpage

\subsection{Algorithm for LinkerDPS}
\label{app:algorithmforlinkerdps}
We describe alorithm of LinkerDPS in \cref{alg:algorithmoflinkerdps}, which is to adopt ancestral sampling \cite{ancestralsampling} with our reverse process discussed in \cref{sec:posteriorsamplingfrom}. Similarly to prior work \cite{dps}, we choose step size $\{\xi_t\}^T_{t=1}$ to be $\xi_t = \frac{\xi'}{\|\nabla_{\hat{r}}\left(\phi_1(\hat{r}\mid\tilde{R}) + \phi_2(\hat{r} \mid \tilde{E})\right)\|}$ with $\xi' = 0.01$. 

\begin{algorithm}[H]
  \caption{LinkerDPS} \label{alg:algorithmoflinkerdps}
  \begin{algorithmic}[1]
    \STATE \textbf{INPUT} pretrained score network of PC-Aware model $s_{\theta^*}$, PC-Aware fragment condition $\mathcal{F}$, surrogate molecule $\tilde{G}$, $\phi_1(r \mid\tilde{R}) := \|\tilde{R} - r\|^2$,
    $\phi_2(r\mid \tilde{E}) := \sum_{\substack{1 \leq i, j \leq N}} \mathbbm{1}_{\tilde{E}_{i,j} \neq \mathbf{0}} \|r_i - r_j\|$, step size $\{\xi_t\}^T_{t=1}$, variance level $\{\bar{\beta_t}=1 - \exp\small(-\int^t_0\beta_u du\small)\}^T_{t=1}$
    \STATE $(\tilde{V}, \tilde{E}, \tilde{R}) \leftarrow \tilde{G}$ \\
    \STATE $V' \leftarrow \tilde{V}$
    \STATE $v_0 \xleftarrow[\text{embedding}]{\text{continuous}} V'$
    \STATE $r_T \sim \mathcal{N}(0, I)$ \hfill \text{$\triangleleft$  run reverse process on $r$}
    \FOR{$t=T, \dotsc, 1$}
      \STATE $\epsilon \sim \mathcal{N}(0,I)$
      \STATE $v_{t} \leftarrow \sqrt{1 - \bar{\beta_t}} v_0 + \sqrt{\bar{\beta_t}} \epsilon$
      \STATE $\hat{s} \leftarrow s_{\theta^*}(t, v_t, r_t \mid \mathcal{F})$  \hfill \text{$\triangleleft$  estimate conditional score $\nabla_{r_t}\log p_t(r_t\mid v_0)$}
      \STATE $\hat{r} \leftarrow \frac{1}{\sqrt{1 - \bar{\beta_t}}}(r_t + \bar{\beta_t}\hat{s})$ \hfill \text{$\triangleleft$  compute $\hat{r} \approx \mathbb{E}_{r_0\mid r_t, V, \mathcal{F}}[r_0]$}
      \STATE $z \sim \mathcal{N}(0, I)$ 
      \STATE $r'_{t-1} \leftarrow \frac{\sqrt{\alpha_t}(1 - \bar{\alpha}_{t-1})}{1 - \bar{\alpha}_t}r_t + \frac{\sqrt{\bar{\alpha}_{t-1}}\beta_t}{1 - \bar{\alpha}_t}\hat{r} + \sqrt{\bar{\beta_t}}z$
        \STATE $r_{t-1} \leftarrow r'_{t-1} - \xi_t\nabla_{r_t}\left(\phi_1(r\mid\tilde{R}) + 
    \phi_2(r\mid \tilde{E})\right)$ \hfill \text{$\triangleleft$   apply guidance from the $\tilde{E}$, $\tilde{R}$}
    \ENDFOR
    \STATE $R' \leftarrow r_0$
    \STATE \textbf{return} $(V', R')$
  \end{algorithmic}
\end{algorithm}

\newpage

\section{Descriptor Optimization}

\label{app:descriptoroptimization} In this experiment, we identify the molecule whose descriptor value is closest to a randomly sampled target value to assess the effectiveness of each generation algorithm. The target value is drawn from an approximated Gaussian distribution of the descriptors of reference molecules in the test dataset. We then select the molecule that minimizes the descriptor distance to the target value. This minimum distance serves as the evaluation metric, where a smaller value indicates that the algorithm better captures a diverse range of molecular properties. We consider five molecular descriptors: Ipc, MolLogP, MolWt, TPSA, and LabuteASA. \cref{tab:comparisonescriptor} presents the average score of each algorithm across the fragments in the ZINC test dataset. The results highlight the strength of HybridLinkers, as they achieve the best or second-best scores across most descriptors.      


\begin{table}[h!]
\caption{Comparison of algorithms based on descriptor optimization. The values represent the minimum MAE between the target value and values of generated samples for given fragment.}
\centering
\renewcommand{\arraystretch}{1.1} 
\resizebox{0.6\textwidth}{!}{ 
\begin{tabular}{lccccc}
\toprule
\textbf{Method} & \textbf{Ipc} & \textbf{MolLogP} & \textbf{MolWt} & \textbf{TPSA} & \textbf{LabuteASA} \\ 
\midrule
FFLOM & 0.216 & 0.135 & 0.133 & 0.145 & 0.154 \\ 
DeLinker & 0.218 & \underline{0.085} & 0.138 & 0.114 & 0.141 \\ 
DiffLinker & 0.219 & 0.114 & 0.172 & 0.132 & 0.177 \\ 
3DLinker & 0.217 & 0.101 & \underline{0.124} & 0.107 & \underline{0.135} \\ 
\rowcolor[HTML]{D9EAD3} HybridLinker (FFLOM) & \textbf{0.209} & 0.087 & \textbf{0.117} & \underline{0.099} & 0.139 \\ 
\rowcolor[HTML]{D9EAD3} HybridLinker (DeLinker) & \underline{0.213} & \textbf{0.064} & 0.127 & \textbf{0.076} & \textbf{0.134} \\ 
\bottomrule
\end{tabular}
}
\label{tab:comparisonescriptor}
\end{table}

\newpage
\section{Ablation Study}
\label{app:ablationstudy}

The refinement process using LinkerDPS plays a central role in HybridLinker. To assess the contribution of each component in $ p(\tilde{G} \mid V, R) $—namely, the likelihood of atoms, bonds, and conformation, each providing three types of guidance—we conduct an ablation study in \cref{tab:comparisonguidance}. Variants of LinkerDPS (A-DPS) are tested by selectively removing specific likelihood components. Additionally, we introduce a version where the likelihood of atoms is modeled as a continuous distribution using a Gaussian kernel:

\begin{equation}
    p(\tilde{v} \mid v) \propto \kappa(\tilde{v}, v),
    \label{eq:ablationlikelihoodofatom}
\end{equation}

which allows for a continuous representation of atomic properties. To implement this, we employ the following reverse process:

\begin{equation}
    dx_t = \left[f(x_t, t) - g^2(t) \nabla_{x_t} \log p_t(x_t \mid \tilde{V}, \tilde{E}, \tilde{R})\right] dt + g(t) d\bar{W}_t,
\label{eq:ablation_reverse_cond_diffusion}
\end{equation}

where $ x_0 = (V, R)$. The LinkerDPS approximation for $ p_t(\tilde{V}, \tilde{E}, \tilde{R} \mid x_t) $ is straightforward to adapt, and we apply this approximation to compute the score in \eqref{eq:ablation_reverse_cond_diffusion}.

\Cref{tab:comparisonguidance} reveals that even variants using a single type of guidance outperform baseline methods. Performance improves further when combining two types, with the best results achieved when all three guidance components are incorporated. The marginal gains from additional guidance suggest that each component provides complementary yet overlapping information about molecular topology. Notably, relaxing the condition on atoms reduces molecular diversity, while the strict constraint on atoms ensures the preservation of surrogate atoms, reinforcing its role in maintaining molecular consistency. While validity slightly improves in the absence of inpainting, this is likely due to the repeated generation of certain valid molecules, as reflected in the decline of diversity metrics that count only valid molecules (V+U and V+N).

\begin{table}[ht]
\caption{Ablation study on likelihood type in LinkerDPS. The rightmost two columns represent \textbf{diversity of valid molecules}. The likelihood components used in each variant are denoted with A (Atom), E (Bond), and R (conformation).}
\centering
\resizebox{0.6\textwidth}{!}{
\begin{tabular}{lccccc}
\hline
\textbf{Guidance Type} & \textbf{Unique} & \textbf{Novel} & \textbf{Valid} &\textbf{V+U} & \textbf{V+N} \\ 
\hline
E & 57.17 & 33.55 & \textbf{76.89} & 49.82 & 29.30  \\ 
R & 57.68 & 33.47 & \textbf{76.89} & 49.93 & 28.97  \\ 
A & 65.10 & 41.79 & 68.39 & 53.59 & 33.71  \\ 
A + E & \underline{67.20} & \underline{44.10} & {69.23} & \underline{54.59} & \underline{35.08} \\ 
A (DPS) + E + R & 61.61 & 37.76 & {75.65} & 52.12 & 31.61 \\ 
A + E + R (Ours) & \textbf{68.45} & \textbf{44.69} & 69.14 & \textbf{55.10} & \textbf{35.17}  \\ 
\hline
\end{tabular}
}
\label{tab:comparisonguidance}
\end{table}

\newpage
\section{Additional Discussion on Experimental Results}
\label{app:additionaldiscussiononexperimentalresutls}
\paragraph{Impact of Balancing Diversity and Validity in Drug Discovery}
Balancing diversity and validity is a fundamental challenge in drug discovery, as it directly determines the success of downstream applications. Our experimental results further reinforce this: HybridLinker’s strength in \cref{tab:comparisondiversity} translates to superior performance in \cref{tab:comparisondruglikeness}. The findings reveal that models excelling in only one aspect—either generating diverse but invalid molecules or producing valid but overly constrained structures—lack practical utility in drug design. This highlights the necessity of striking a balance between diversity and validity for practical applications, establishing HybridLinker as a crucial framework for advancing molecular generation in drug discovery.
\paragraph{Impact of surrogate quality on HybridLinker’s performance}
Comparing the two HybridLinker implementations in \cref{tab:comparisondiversity}, each utilizing FFLOM and DeLinker as surrogate generators, highlights the impact of surrogate quality on HybridLinker’s performance. FFLOM exhibits higher Uniqueness and Novelty than DeLinker, and accordingly, HybridLinker(FFLOM) achieves superior scores in these metrics compared to HybridLinker(DeLinker). Conversely, HybridLinker(DeLinker) demonstrates higher validity, reflecting DeLinker’s tendency to generate more valid molecules than FFLOM. A similar trend is observed in drug-likeness optimization results presented in \cref{tab:comparisondruglikeness} and \cref{tab:comparisonescriptor}, where DeLinker-produced molecules generally exhibit better drug-likeness, which corresponds to HybridLinker(DeLinker) achieving stronger drug-likeness scores. Likewise, in the molecular descriptor optimization task, implementations using surrogates that perform well for specific descriptors tend to show better optimization results for those properties. Furthermore, the effectiveness of LinkerDPS in molecule refinement is evident in \cref{tab:comparisondiversity}. While HybridLinker(DeLinker) retains higher validity than HybridLinker(FFLOM), the validity gap between them significantly narrows. This suggests that LinkerDPS successfully refines invalid molecules into valid ones, effectively mitigating the initial disparity between FFLOM and DeLinker.

\newpage

\section{Related Works: Molecule Generation and Guidance on Diffusion Models}
\label{app:relatedwork}

Here, we summarize the related works in molecule generation domain. 

\begin{table*}[ht]
\centering
\renewcommand{\arraystretch}{1.4} 
\caption{Comparison of molecular generation tasks based on inputs and outputs.  It reveals that our setup, linker generation, is a general formulation of fragment-based molecular generation.  $ G $ represents a complete molecule represented as a 3D graph, while $ F $, another 3D graph, is a partial molecule with $ |F| \ll |G| $ in most cases. $ M $ denotes the binding pocket of target proteins, providing structural constraints. $ \mathcal{T} $ refers to the bonding topology of a molecule, defined by atom types and chemical bonds. \textbf{T} represents an arbitrary transformation in SE(3) (e.g., rotations and translations).}
\small 
\resizebox{0.98\textwidth}{!}{%
\begin{tabular}{lcccc}
\hline
\textbf{Task} & \cellcolor[HTML]{D9EAD3}\textbf{Linker Gen. (Our Setup)} & \textbf{2D Linker Gen.} & \textbf{Scaffold Hop.} & \textbf{PROTAC Design} \\ 
\hline
\textbf{Input}  
& \cellcolor[HTML]{D9EAD3}$G_1$, $G_2$  
& $\mathcal{T}_1$, $\mathcal{T}_2$  
& $F_1$, $F_2$  
& $G_1$, $G_2$  \\  

\textbf{Output}  
& \cellcolor[HTML]{D9EAD3}$G' \mid G_1, G_2 \subset G'$  
& $\mathcal{T}' \mid \mathcal{T}_1, \mathcal{T}_2 \subset \mathcal{T}'$  
& $G' \mid F_1, F_2 \subset G'$  
& $G' \mid \mathbf{T}_1[G_1], \mathbf{T}_2[G_2] \subset G'$  \\  
\hline
\textbf{Task} & \textbf{Target-Aware Drug Design} & \textbf{Fragment Growing} & \textbf{De Novo Gen.} & \textbf{Conformation Gen.} \\ 
\hline
\textbf{Input}  
& $G_1$, $G_2$, $M'$  
& $G^\star$  
& None  
& $\mathcal{T}^\star$ \\  

\textbf{Output}  
& $G' \mid G_1, G_2 \subset G, M'$  
& $G' \mid G^\star \subset G'$  
& $G'$  
& $G'=(\mathcal{T}^\star, R') \mid \mathcal{T}^\star$ \\  
\hline
\end{tabular}%
}
\label{tab:moleculetasks}
\end{table*}

\paragraph{Fragment-Based Drug Design.} Fragment-Based Drug Design (FBDD)\cite{jin2023fflom,igashov2024difflinker,diffhop,imrie2020delinker,huang2022_3dlinker} is a drug discovery approach that utilizes small molecular fragments and optimizes them into larger, more potent drug candidates. Deep learning-based FBDD encompasses a variety of tasks, each distinguished by its learning objective. \textbf{Linker Generation} \citep{igashov2024difflinker,huang2022_3dlinker} is a fundamental task in FBDD, where two molecular fragments are connected to form a complete molecule. Similarly, \textbf{Topology Linker Generation} \citep{jin2023fflom,imrie2020delinker,zhang2024grelinker} focuses on linking the topological graphs of fragments to generate a complete molecular topology graph. \textbf{Scaffold Hopping}\cite{diffhop,decompopt} involves replacing the core structure of a given molecule while preserving its biological activity. \textbf{PROTAC Design}\cite{guan2024linkernet,kao2023_aimlinker} focuses on generating molecules that incorporate fragment linkers with flexible rotation and translation in 3D space. \textbf{Fragment Growing}\cite{moler,autofragdiff} expands single small molecular fragment into larger drug-like structure. Our work specifically addresses Linker Generation, tackling the critical trade-off between diversity and validity observed in existing models. \cref{tab:moleculetasks} Summarizes the tasks in FBDD along with two standard molecular generation tasks—De Novo Generation and Conformation Generation.

\paragraph{Molecule Generation.} Molecule generation\cite{edm,midi,graphlatentdiffusion,liu2021graphpiece,madhawa2019graphnvp,sbdddiff} based on deep learning plays a crucial role in drug discovery and is broadly categorized into \textbf{De Novo Molecule Generation}\citep{moldiff,micam,midi,grum}, \textbf{Fragment-Based Drug Design}\citep{jin2023fflom,igashov2024difflinker,diffhop,imrie2020delinker,huang2022_3dlinker}, \textbf{Target-Aware Drug Design}\citep{targetdiff,diffdock,sbdddiff,pocket2mol}, and \textbf{Conformer Generation}\citep{geodiff,rebind,torsiondiff,confvae}, based on their input and output formulations. Molecule generation can also be classified into three sub-tasks: \textbf{Topology Generation}\citep{graphaf,micam,jin2023fflom,grum,graphpolish,graphlatentdiffusion,gldm}, \textbf{Point Cloud Generation}\citep{edm,geoldm,igashov2024difflinker,targetdiff,sbdddiff}, and \textbf{3D Graph Generation}, where deep learning models learn the distributions of molecular bonding topology, spatial coordinates, and 3D molecular graph representations, respectively. Our work falls under Fragment-Based Drug Design, introducing a hybrid approach that leverages pretrained models for topology and point cloud generation in a zero-shot manner.

\paragraph{Guidance on Diffusion Models.}
Diffusion models\cite{ddpm,ddim,latentdiffusion,videodiffusion,grum,edm,midi} have demonstrated exceptional performance across various generative tasks, including image, video, graph, and molecular generation. A recent advancement in diffusion models is \textbf{conditional generation}\cite{cfg,dps,mcg,classifierguidance}, which enables sampling from a conditional distribution based on desired properties. This is achieved by incorporating a guidance term into the backward diffusion process. To compute the guidance term, \textbf{Classifier Guidance} \citep{classifierguidance} employs a classifier trained to estimate the likelihood of a given property, while \textbf{Classifer-Free Guidance} (CFG) \citep{classifierguidance} replaces the classifier with a conditional diffusion model. \textbf{CFG++} \citep{mcg} is designed to mitigate off-manifold sampling issues, and \textbf{Diffusion Posterior Sampling} (DPS) \citep{dps} was developed to solve nonlinear noisy inverse problems. In this paper, we introduce the first DPS-based method for guiding diffusion models in molecular point cloud generation using molecular topology. Our approach introduces a novel energy-based function that effectively bridges topological and spatial molecular representations.






\end{document}